\title{Biosensor Arrays for Estimating Molecular Concentration in Fluid Flows}
\author{Maryam Abolfath-Beygi 
        \and Vikram Krishnamurthy\thanks{Department of Electrical and Computer Engineering, University of British Columbia, Vancouver, Canada ({\tt \{maryabd,vikramk\}@ece.ubc.ca}).}}
\begin{document}

\maketitle

\begin{abstract}
This paper constructs dynamical models and estimation algorithms for the  concentration of target molecules in a fluid flow using an array of novel biosensors. Each biosensor is constructed out of protein molecules  embedded in a synthetic cell membrane. The concentration evolves according to an advection-diffusion partial differential equation which is coupled with chemical reaction equations on the biosensor surface. By using averaging theory methods and the divergence theorem, an approximate model is constructed that describes the asymptotic behaviour of the concentration as a system of ordinary differential equations. The estimate of target molecules is then obtained by solving a nonlinear least squares problem. It is shown that the estimator is strongly consistent and asymptotically normal. An explicit expression is obtained for the asymptotic variance of the estimation error. As an example, the results are illustrated for a novel biosensor built out of protein molecules.
\end{abstract}

\begin{keywords} 
Advection-diffusion partial differential equation, Multi-compartment model, Asymptotic analysis of estimator, Protein-based biosensor array, Concentration estimation
\end{keywords}

\begin{AMS}
15A15, 15A09, 15A23
\end{AMS}

\pagestyle{myheadings}
\thispagestyle{plain}
\markboth{MARYAM ABOLFATH-BEYGI AND VIKRAM KRISHNAMURTHY}{BIOSENSOR ARRAYS FOR ESTIMATING MOLECULAR CONCENTRATION IN FLUID FLOWS}

\section{Introduction}
Estimating the concentration of target molecules in a fluid flow over multiple biosensors is a challenging problem due to two non-standard features. Firstly, it is a parameter estimation problem of an advection-diffusion partial differential equation (PDE) which is coupled with Dirichlet and Neumann boundary conditions and cannot be solved analytically. Secondly, the measurement process affects the system state since each biosensor grabs target molecules and changes the concentration in the flow. This is  unusual since in most statistical inference problems, observation does not change the system state. The main results of this paper are briefly stated as follows:
\begin{remunerate}
\item An advection-diffusion PDE model is constructed to model the variations of the concentration of target molecules that flow past a linear array of biosensors. To facilitate estimation of the concentration of target molecules, Theorem~\ref{multiple-comp} develops an approximation method to describe the dynamics of the problem by a system of ordinary differential equations (ODEs). In Theorem~\ref{multiple-comp}, the multi-compartment ODE model is derived by exploiting the multiple time-scale behaviour of the system, together with perturbation methods and the divergence theorem.
\item  A novel biosensor constructed out of protein molecules is used as an actual example to illustrate our results. The development of this biosensor was first published in \emph{Nature} \cite{bib1}. The biosensor incorporates ion channels into a tethered lipid bilayer membrane where the
conductance of the channels is
switched by the recognition event.
 The PDE model for this biosensor is specified and solved numerically using the Comsol multi-physics finite element analysis software. We show how the PDE model and ODE approximations can satisfactorily model this novel biosensor. 
\item The estimation of target molecule concentration is posed as a parameter estimation problem in terms of the derived ODE model. The estimate is computed numerically for the novel biosensor via nonlinear least squares method.
The asymptotic behaviour of the estimator is analyzed. It is shown that the estimator is asymptotically unbiased and normal and its asymptotic variance is derived. According to the expression for the variance, the achievable  improvement in the estimate based on the number of biosensors is evaluated using the results from the ODE model.
\end{remunerate}
Inverse problems in fluid mechanics deal with estimating an unknown coefficient or function in the initial or boundary condition for a parabolic PDE (see \cite{banks,khoshgel,V,VI,cylinder}). In our case, the problem is estimating the boundary condition of a parabolic PDE.
The model-based state or parameter estimation of distributed systems (infinite dimensional systems)
based on a distributed-parameter description is quite complex.
 To address this problem, the system description is converted from
a distributed-parameter into a lumped-parameter form. This
conversion can be achieved by methods for solving partial
differential equations, such as finite-difference method \cite{16},
the finite-element method, modal analysis \cite{3} and spectral
method \cite{17}. 
 
The estimation problem in this paper can also be viewed as a source determination in a fluid system. In \cite{Wilson}, the strength (emission rate) of a contaminant source is estimated using a fixed network of concentration measurements and a Lagrangian trajectory model. In \cite{Flesch}, a backward-time Lagrangian stochastic model is used to estimate the emission rate of a surface area source. 
In \cite{Bayesian}, Baysian inference is applied to solve a chemical source determination problem where the posterior joint distribution of location, intensity, and temporal properties of a point source is obtained by a Markov chain Monte Carlo (MCMC) method. The authors develop a dual problem for the advection-diffusion equation using adjoint dispersion equations which
requires  significantly less amount of computations comparing to resolving the main equation for every source term. 

In the above works, the PDE has standard non-coupled initial and boundary conditions where measurements do not affect the system state whereas our problem has unconventional boundary conditions. 
%
In order to convert the system description from a distributed-parameter into a lumped-parameter form, an approximation method is used which is based on the two-compartment model \cite{cdc}. The two-compartment model is used in modeling a variety of binding experiments influenced by diffusion and mass transport. For example, in  \cite{Computational} and \cite{Extending}, this model is used to study and characterize the kinetic
properties of biomolecular interactions in  optical biosensors.
 In this work, the two-compartment model is extended and developed to a new multi-compartment model to describe the reaction-diffusion experiment in a flow chamber over a linear array of multiple biosensors. With this method, the PDE model is approximated by a system of ordinary differential equations.

  In Sec.\ref{sec:II}, the PDE model for a general reactive surface is described. It is followed by the derivation of the multi-compartment ODE model in Sec.\ref{s1}. The asymptotic properties of the least squares estimator for the concentration of target molecules is studies in Sec.\ref{s2} using the multi-compartment model of Sec.\ref{s1}. Sec.\ref{result} presents the results for a protein-based biosensor.

  \section{PDE model for the fluid flow}
\label{sec:II}
The aim is to estimate the concentration of target molecules in a fluid system where the dynamics are described by an advection-diffusion PDE model.

 Consider a flow chamber with a rectangular cross section where a flow of target molecules flows past multiple surface-based biosensors along the length of the chamber. There are $N$ identical biosensors which form a linear array along the flow direction on the surface of the chamber floor. We introduce three-dimensional Cartesian coordinates $(x,y,z)$ with the $y$-axis along the
flow direction and the $z$-axis along the height of the flow chamber and perpendicular to the surface of the biosensors. Biosensor $i$, for $i=1,2,\ldots ,N$, is located in the range $\left [ y_{i,1},y_{i,2} \right]$ along the $y$-axis and $\left[0,w\right]$ along the $x$-axis. The inlet of the flow chamber lies in the $x-z$ plane. The system is symmetric about the $x$-axis since the ratio of the height to the width of the flow chamber is selected to be less than $1/20$ \cite{hw}. The dimensions of the flow chamber and biosensors are:
\begin{align}\label{dimensions}
&\text{Flow chamber: }\text{Height}=h,\quad \text{Length}=l, \quad \text{Width}=w,\\ \nonumber
&\text{Biosensors: }\text{Length}=L, \quad \text{Spacing}=d,\\ \nonumber
&\text{Biosensor }i\text{ is located in the range } y \in \left [ y_{i,1}, y_{i,2}\right],\quad 0 \leq y_{i,1}, y_{i,2} \leq l.
\end{align}
  A flow chamber with $N=3$ biosensors is illustrated in Fig.\ref{fig0}. 
%
When target molecules in the solution arrive at the biosensors, chemical reactions are initiated which result in a change in impedance that is translated to change in the measured current. 

Below, an advection-diffusion PDE is used to describe the spatio-temporal evolution of concentration of target molecules in the flow chamber. It is coupled with a set of ODEs on the boundary which describes the adsorption of target molecules on the biosensors as a result of chemical reactions. 

{\bf Fluid flow dynamics:} The concentration of target molecules in the flow chamber (\ref{dimensions}), denoted by $A(t,y,z)$ is governed by an advection-diffusion PDE \cite{bib6} 
%
\begin{equation}\label{e1}
\frac{\partial A}{\partial t}=\gamma \left( \frac{{\partial} ^2 A}{\partial y^2} + \frac{{\partial} ^2 A}{\partial z^2} \right ) - v(z) \frac{\partial A}{\partial y}, \quad y \in (0,l), \quad z \in (0,h).
\end{equation}
Here $\gamma$ is the diffusion constant of the target molecule and $v(z)$ is the flow velocity in $y$ direction. The flow is assumed to be laminar and fully developed with a parabolic velocity profile defined by
\vspace{-4mm}
\begin{equation} \label{vel}
v(z)=4 \bar{v} (z/h)(1-z/h),
\end{equation}
where $\bar{v}$ is the maximum velocity \cite{Extending}.
\begin{figure}[t]
 \begin{center}
  \includegraphics[trim=0 1.5cm 0 1.5cm,width=.5\textwidth]{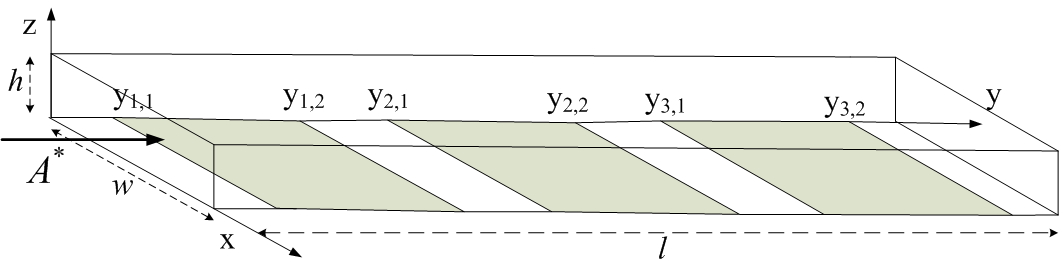}
   \caption{\label{fig0}\small{An equally spaced linear array of three biosensors in a rectangular flow chamber. The fluid containing target molecules enters from the left side. The concentration of target molecules at the inlet is $A^*$ as expressed in the boundary condition (\ref{inlet}).}}
   \end{center}
\end{figure}
Initially, the flow chamber is empty and the concentration inside the flow chamber is zero. So the initial condition is written as
\begin{align}\label{initial}
A(t=0,y,z)=0, \quad y \in (0,l), \quad z \in (0,h).
\end{align}

To complete the modeling of the fluid flow, it is necessary to specify the following Dirichlet and Neumann boundary conditions: The concentration at the inlet of the flow chamber is constant during the estimation process and equal to $A^*$. At the outlet of the flow chamber, there is no diffusive flux. There is insulation at the ceiling of the flow chamber at $z=h$. The gaps between the sensors on the floor of the flow chamber, at $z=0$, are also insulated. These boundary conditions are described as
\begin{equation} \label{inlet}
A(t,y=0,z)=A^{*},~\left . \frac{\partial A}{\partial y} \right |_{y=l}=0,~\left . \frac{\partial A}{\partial z} \right |_{z=h}=0,~\left .  \frac{\partial A}{\partial z} \right |_{z=0,\large{y \notin {\cup}_{i=1}^N [y_{i,1},y_{i,2}]}}=0.
\end{equation}
On the surface of each biosensor, the adsorption flux of target molecules is equal to the rate of consuming target molecules by the reactions. 

 {\bf Chemical dynamics}: Assuming that biosensor $i$, $i \in \{1,2,\ldots,N\}$, is located in  $y \in [y_{i,1},y_{i,2}]$ on y-axis, the corresponding boundary condition is expressed as \cite{bib6}
\begin{equation}\label{R}
\left . \gamma \frac{\partial A}{\partial z} \right |_{z=0,\large{y \in \left [y_{i,1}, y_{i,2} \right ]}}=R(A(t,y,z=0),\textbf{u}_i(t,y)),
\end{equation}
where the vector $\textbf{u}_i(t,y)$ contains the values of concentration of chemical species at time $t$ and location  $y$ on biosensor $i$. $R(A,\textbf{u}_i)$ is the rate of adsorption of target molecules per unit area on the biosensor surface. The rate of adsorption at each point is a function of the concentration of target molecules and chemical species on that point on the biosensor.	
The dynamics of the chemical species of biosensor $i$ at location $y$ are described by a system of ODEs as
\begin{equation}\label{u_n}
\hspace{-1mm}\frac{d\textbf{u}_i(t,y)}{dt}=G(\textbf{u}_i(t,y),{A(t,y,0)}),\quad  t > t_i,~ {\textbf{u}}_i \left ( t_i ,y\right )=u_0, \quad \text{for} \quad y \in \left[ y_{i,1}, y_{i,2}\right].
\end{equation}
Here, $G(\cdot)$ is a function which is described by the rate law of reactions on the biosensor. $t_i$ is the time instant at which the flow reaches biosensor $i$ and the biosensor starts responding. The rate of change of $\textbf{u}_i(t,y)$  depends on the concentration $A(t,y,z=0)$ of target molecules on the biosensor. The constant $u_0$ is the initial concentration of chemicals on each biosensor.  

{\bf Aim: } The aim is to estimate the concentration $A^{*}$ at the inlet of the flow chamber which appears in the boundary condition (\ref{inlet}). After describing the biosensor array model, statistical estimation algorithms are given in Sec.\ref{s1} and Sec.\ref{s2} to estimate $A^*$ given noisy measurements from the biosensors.

{\bf Measurement equation}: Finally, the measurement equation is specified. The response of biosensor $i$ at time $t$  is denoted by $g_i(A_1,t)$ where $A_1$ refers to the concentration at the inlet.  The response $g_i(A_1,t)$ can be described as a function of $\bar{\textbf{u}}_i(t)$ which denotes the surface average of the concentration vector $\textbf{u}_i(t,y)$ on biosensor $i$;
\begin{equation}\label{response}
g_i(A_1,t)=F\left (\bar{\textbf{u}}_i(t) \right ),\quad \bar{\textbf{u}}_i(t)=\frac{1}{y_{i,2}-y_{i,1}} \int_{y_{i,1}}^{y_{i,2}} \textbf{u}_i(y,t)\,dy.
\end{equation}
Here, $F(\cdot)$ is the transducer function which translates the concentration quantities on the biosensor to a corresponding electrical signal. In Sec.\ref{result}, we give a specific example of an actual biosensor where $F(\cdot)$ models the  conductance of the biosensor.  Considering the PDE model (\ref{e1})-(\ref{u_n}), the measurement taken at biosensor $i$ at time $t^{i,k}$, denoted by $m_i^k$, is
 \begin{equation}\label{measure}
{m}_i^k=g_i(A^*,t^{i,k})+{n}_i^k, \quad i \in \{ 1,2,\ldots ,N \},\quad k \in \{ 1,2,\ldots , S \}.
\end{equation}
Recall $A^*$ is the value of the concentration $A_1$ at the inlet and $n_i^k$ is the corresponding measurement noise. In (\ref{measure}), $S$ is the number of measurement samples taken at each biosensor. The noise samples $n_i^k$ for $i=1,\ldots,N$ and $k=1,\ldots,S$ are independent normally distributed with zero mean and finite variance $\sigma^2$. In (\ref{measure}), $g_i(A^*,t^{i,k})$ is an implicit function of the concentration $A^*$ through the PDE model (\ref{e1})-(\ref{u_n}).
%

\begin{figure}[t]
 \begin{center}
  \includegraphics[trim=.5cm 1cm 0cm 1.4cm, width=0.5\textwidth]{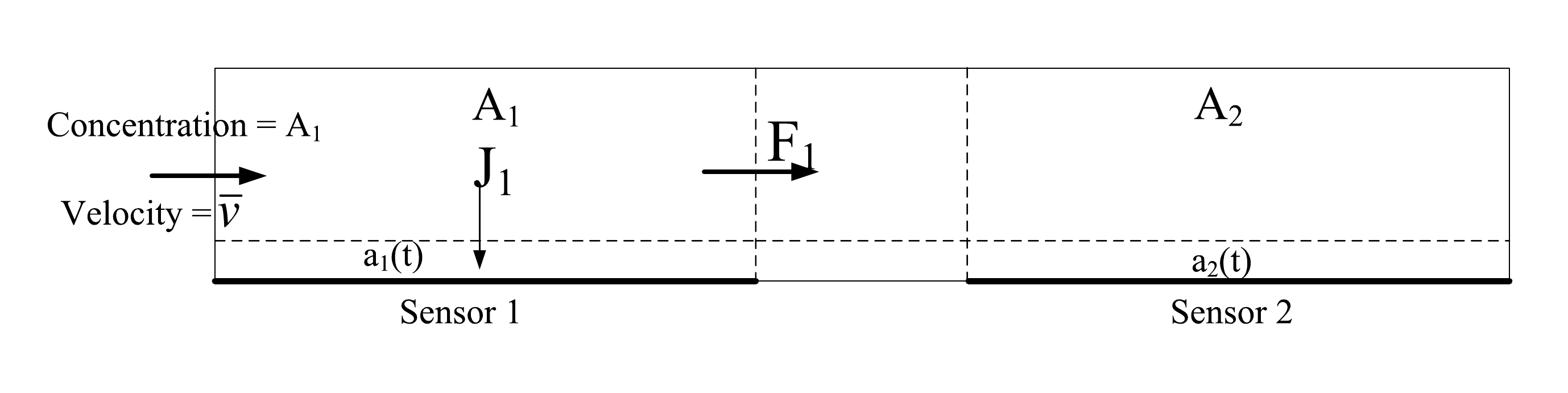} 
    \caption{\label{fig1} \small{Four-compartment model for two biosensors.}}
    \end{center}
\end{figure}
%
\section{Multi-compartment model approximation}
\label{s1}
Given the measurement equation of (\ref{measure}) and the PDE model of Sec.\ref{sec:II} defined by (\ref{e1})-(\ref{u_n}), the aim is to estimate the concentration $A^{*}$ at the boundary in (\ref{inlet}). The PDE is coupled with a set of ODEs through the boundary condition (\ref{R}) and cannot be solved analytically. To estimate $A^*$ in (\ref{inlet}), in this section, a multi-compartment ODE model is introduced that approximates the PDE by a system of ODEs.

The multi-compartment model is an extension of the existing two-compartment model to a new model for mass transport-binding experiments on a linear array of multiple biosensors. Its derivation is based on the multiple time-scale behaviour of the system \cite{singular}  and the divergence theorem. First, the  two-compartment model is reviewed in Sec.\ref{subsecA}. Then, the multi-compartment model is then derived in Sec.\ref{subsecB}.
\subsection{Review of the two-compartment model}
\label{subsecA}
The two-compartment model consists of a set of coupled ordinary differential equations which are used to analyze a variety of binding experiments influenced by mass transport \cite{Extending}. It is derived based on the multiple time-scale behaviour of the system and models the slow response of the system.
When the intrinsic reaction rates are comparable to or faster than the rate of transport of molecules to the reactive surface, a depleted region forms on top of the reactive surface where the amount of target molecules is slowly varying comparing to the concentration in the bulk. Because of this two time-scale behaviour, the flow chamber on top of the biosensor is divided vertically into two compartments, as shown in Fig.\ref{fig1}, in order to consider their dynamics separately. This model ignores the brief transitions before the bulk (outer compartment) concentration falls or rises to the concentration $A^{*}$ at the inlet. 
The concentration of target molecules within each compartment
is assumed to be spatially homogeneous. 
The concentration in the outer compartment, denoted by $A_1$, is equal to the concentration $A^{*}$  at the inlet of the flow chamber. The dynamics of the average concentration of target molecules in the inner compartment, denoted by $\bar{a}_1(t)$, is described by \cite{cdc}
\begin{align}\label{a_1}
h_{0}\frac{d\bar{a}_1(t)}{dt} = - R(\bar{a}_1(t),{\bar{\textbf{u}}_1(t)})+ \frac{\gamma}{h_0}\left ({A_1}-\bar{a}_1(t) \right),\quad, t >0, \quad \bar{a}_1(0)=0,
\end{align}
where
\vspace{-5mm}
\begin{equation}\label{hh0}
  h_0=\frac{1}{1.464} \left[\frac{\gamma h L}{\bar{v}}\right]^{1/3}
  \end{equation}
   is the height of the inner compartment. In (\ref{hh0}), $\gamma$ is the diffusion constant, $\bar{v}$ is the maximum velocity in the velocity profile (\ref{vel}), and $R(\cdot,\cdot)$ is defined in (\ref{R}). The average concentration of chemicals on the biosensor, denoted by $\bar{\textbf{u}}_1(t)$, is governed by
\vspace{-2mm}
\begin{equation} \label{u_1-two-comp}
\frac{d \bar{\textbf{u}}_1(t)}{dt} = G({\bar{\textbf{u}}}_1(t),\bar{a}_1(t)),\quad t>0,\quad \bar{\textbf{u}}_1(0)=u_0.
\end{equation}
%
%
\subsection{The multi-compartment model}
\label{subsecB}
In this section, the PDE model (\ref{e1})-(\ref{u_n}) is approximated by a system of ordinary differential equations. The flow chamber is partitioned into a series of two-compartment blocks above biosensors which are connected by middle compartments as shown in Fig.\ref{fig1}. As shown in Theorem~\ref{multiple-comp} below, the response of each biosensor can be described by an individual two-compartment model with a different concentration in its outer compartment. We apply the divergence theorem to the advection-diffusion PDE of (\ref{e1}) in the outer compartment associated with each biosensor  to find the concentration at the inlet of the next biosensor. By exploiting the different time-scale dynamics of the concentration in the flow chamber and applying some perturbation methods, a two-compartment model for the next biosensor can be derived. The concentration in the outer compartments of consecutive biosensors are related by (\ref{Ai}) in Theorem~\ref{multiple-comp}. For the derivation of the multi-compartment model, the following assumptions are required:
\begin{enumerate}[(1)]
\item The concentration $A(t,y,z)$ in the flow chamber is increasing and concave in $t$ for $y<t \bar{v}$ where $\bar{v}$ is the maximum velocity in the velocity profile (\ref{vel}).
\item The response of biosensor $i$ commences at time $t_i$ where 
\begin{align}\label{ti}
 t_i=\frac{y_{i,2}}{\bar{v}},~i=1,2,...,N.
\end{align}
Thus, $t_i$ is the time instant at which the flow reaches the far end of biosensor $i$ at $y=y_{i,2}$ in the flow chamber.
\item The spacing of biosensor $i$ is sufficiently small such that $\gamma^{1/3}(t_i-t_{i-1})=O(\gamma^{1/3})$ as $\gamma \to 0$ for $i \in \{ 1,2, \ldots ,N\}$. Here, $O(.)$ denotes the Landau big-O.
\end{enumerate} 
Assumption (1) models the physical reality that the concentration $A(t,y,z)$ at each point is bounded and eventually asymptotes at the concentration $A^*$. Therefore, $A(t,y,z)$ is concave in time after a certain time instant. 
Assumption (2) models the two-time scale behaviour of the flow chamber: the concentration in the outer compartment of each biosensor evolves rapidly compared to that in the inner compartment (vicinity of the surface of the biosensor). Assumption (3) reflects the fact that the spacing of the biosensors should be sufficiently small such that each biosensor is affected by the depletion region that is generated by the previous one. Otherwise, the biosensors have identical responses. The above assumptions are justified by numerous experimental studies of the biosensor, see \cite{bib6}.
 The following multi-compartment characterization is the main result of this section. 
 
\begin{theorem}
\label{multiple-comp}
Consider a flow of target molecules over an equally spaced linear array of $N$ identical biosensors in the flow chamber (\ref{dimensions}). Suppose the concentration of target molecules at the inlet of the flow chamber is a constant denoted by $A^*$. The concentration of target molecules $A(t,y,z)$ and chemical species are described by the PDE model (\ref{e1})-(\ref{u_n}). Under assumptions (1) and (2), as $\gamma \to 0$, there exists a time instant $t^*$ such that for $t \in (t_i,t^*)$, the dynamics of the average of the surface concentration of chemical species on biosensor $i$, denoted by $\bar{\textbf{u}}_i(t)$, satisfies 
\begin{align}\label{eq7}
h_{0}\frac{d\bar{a}_i(t)}{dt} &= \frac{\gamma}{h_0}\left ({A_i}-\bar{a}_i(t) \right)- R(\bar{a}_i(t),{\bar{\textbf{u}}}_i(t)) + O(\gamma^{4/3}),\quad t \in (t_i,t^*)\nonumber \\
\frac{d {\bar{\textbf{u}}}_i(t)}{dt} &= G({\bar{\textbf{u}}}_i(t),\bar{a}_i(t) + O(\gamma^2)),\quad t \in (t_i,t^*), \nonumber \\ 
\bar{a}_i(t_i)&=0, \quad \bar{\textbf{u}}_i \left ( t_i \right ) =u_0,\quad i = 1,\ldots,N.
 \end{align}
 Recall $\gamma$ in (\ref{eq7}) denotes the diffusion constant. The concentration in the flow chamber for $z \in (h_0,h-h_0)$ can be expressed as
 \begin{align}\label{jikjik}
A(t,y,z)=A_i + O(\gamma),\quad y \in (y_{i-1,2},y_{i,2}),~z \in (h_0,h-h_0),~t \in (t_i,t^*),
\end{align}
 where $h_0$ is defined in (\ref{hh0}).
   $A_i$ is a constant obtained by the following recursion:
   \begin{align}\label{Ai}
A_i=\alpha A_{i-1},~i=2,\ldots,N,~ \alpha=\left( 1- \frac{3 \gamma L}{2 h_0  \bar{v} h}\right),~A_1=A^*.
\end{align}
  \end{theorem}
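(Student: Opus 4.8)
The plan is to prove the statement by induction on the biosensor index $i$, using the two-compartment model reviewed in Sec.~\ref{subsecA} both as the base case and as the template for each inductive step. For $i=1$ the outer-compartment concentration is the inlet value $A_1=A^*$, and (\ref{eq7}) for $i=1$ is exactly (\ref{a_1})--(\ref{u_1-two-comp}) with the truncation errors made explicit. I would derive this by rescaling the vertical variable in (\ref{e1}) by the inner-layer thickness $h_0=O(\gamma^{1/3})$ of (\ref{hh0}): this exposes the fast/slow splitting --- the bulk concentration relaxes on an $O(1)$ time scale whereas $\bar a_1$ and $\bar{\textbf{u}}_1$ evolve on the slow $O(\gamma^{-1/3})$ scale --- after which a L\'ev\^eque-type diffusion-boundary-layer reduction collapses the wall flux $\gamma\,\partial_z A|_{z=0}$ to the lumped transport term $\tfrac{\gamma}{h_0}(A_1-\bar a_1)$ up to an $O(\gamma^{4/3})$ remainder, while averaging (\ref{u_n}) over $y\in[y_{1,1},y_{1,2}]$ and replacing $A(t,y,0)$ by $\bar a_1(t)$ costs only $O(\gamma^2)$. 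Assumption~(1) is what makes this honest: monotonicity and concavity of $A$ behind the advancing front $y<t\bar v$ bound $\partial_t A$ and force it to $o(1)$, so the bulk is genuinely quasi-stationary, and assumption~(2) fixes the onset time $t_1=y_{1,2}/\bar v$ at which the reduced description takes over.

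For the inductive step, suppose (\ref{eq7})--(\ref{jikjik}) hold for biosensor $i-1$ with bulk concentration $A_{i-1}$. The value $A_i$ feeding biosensor $i$ is obtained by a mass balance: write (\ref{e1}) in conservation form $\partial_t A+\nabla\!\cdot\!\textbf{J}=0$ with flux $\textbf{J}=\big(v(z)A-\gamma\,\partial_y A,\ -\gamma\,\partial_z A\big)$, integrate over the control volume $V$ made up of the two-compartment block over biosensor $i-1$ together with the middle compartment up to the leading edge of biosensor $i$, and apply the divergence theorem. Over the fast time scale $\int_V\partial_t A\,dV=o(1)$ by assumption~(1); the diffusive fluxes through the vertical cross-sections are $O(\gamma)$, while those through the face $z=h$ and the inter-biosensor gap at $z=0$ vanish by (\ref{inlet}); the advective throughput on the inflow and outflow cross-sections is $A_{i-1}\int_0^h v(z)\,dz=\tfrac23\bar v h\,A_{i-1}$ and $\tfrac23\bar v h\,A_i$ respectively (the near-wall deficit being confined to an $O(h_0)$ layer in which $v(z)$ is itself $O(h_0)$, a higher-order correction); and the net adsorption on biosensor $i-1$, read off from the inner-compartment equation of (\ref{eq7}), equals $\tfrac{\gamma}{h_0}(A_{i-1}-\bar a_{i-1})\,L=\tfrac{\gamma L}{h_0}A_{i-1}+O(\gamma^{4/3})$ once the quasi-steady scaling forces $\bar a_{i-1}=O(\gamma^{2/3})$. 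Equating inflow to outflow plus sink yields $\tfrac23\bar v h\,(A_{i-1}-A_i)=\tfrac{\gamma L}{h_0}A_{i-1}+O(\gamma^{4/3})$, i.e. the recursion $A_i=\alpha A_{i-1}$ of (\ref{Ai}) with $\alpha=1-\tfrac{3\gamma L}{2h_0\bar v h}$, and the same computation pins $A(t,y,z)$ to $A_i$ for $z\in(h_0,h-h_0)$ to the order asserted in (\ref{jikjik}). With $A_i$ in hand, I would rerun the base-case boundary-layer and averaging argument for biosensor $i$ --- now with bulk feed $A_i$ in place of $A^*$ and initial data $\bar a_i(t_i)=0$, $\bar{\textbf{u}}_i(t_i)=u_0$ --- to obtain (\ref{eq7}) for index $i$, closing the induction. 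One takes $t^*$ to be the first time some slow variable $\bar a_j,\bar{\textbf{u}}_j$ leaves the regime in which this linearized, quasi-steady reduction remains valid (for instance before any biosensor approaches saturation) or the no-flux outlet condition in (\ref{inlet}) begins to propagate upstream, whichever comes first.

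The main obstacle is the uniform-in-$\gamma$ verification that the bulk field is spatially flat and time-stationary to the stated order throughout $(t_i,t^*)$: only the qualitative monotonicity and concavity of assumption~(1) are available, not a quantitative a priori bound on $\partial_t A$ and $\partial_y A$, so turning (1) into the estimates $A(t,y,z)=A_i+O(\gamma)$ and $\int_V\partial_t A\,dV=o(1)$ uniformly in $\gamma$ is the delicate part; once that reduction is secured, everything else is bookkeeping of powers of $\gamma^{1/3}$ in the boundary-layer expansion and the flux balance. A secondary point is to check that the deficit created over biosensor $i-1$ is not smeared across the full channel height before it reaches biosensor $i$ --- the vertical spread over the transit time $(y_{i,2}-y_{i-1,2})/\bar v$ is $O(\gamma^{1/2})$, smaller than $h_0=O(\gamma^{1/3})$, which is the quantitative content of the ``sufficiently small spacing'' in assumption~(3).
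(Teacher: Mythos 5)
Your overall architecture --- induction over the array, a two-compartment/boundary-layer reduction for each biosensor, and a divergence-theorem mass balance to pass from $A_{i-1}$ to $A_i$ --- is the same as the paper's (Appendix~\ref{multi_comp}), and your identification of where assumptions (1)--(3) enter is accurate. However, there is one concrete quantitative gap in the step that actually produces the recursion (\ref{Ai}): your choice of control volume and the claim that the near-wall advective deficit is ``a higher-order correction.'' You integrate over the \emph{full} channel height $z\in(0,h)$ and write the outflow as $\tfrac23\bar v h\,A_i$, dismissing the depleted layer $z<h_0$ on the grounds that $v(z)=O(h_0)$ there. But $\int_0^{h_0}v(z)\,dz=\tfrac{2\bar v h_0^2}{h}\bigl(1+O(h_0/h)\bigr)$, while the adsorption sink is $\tfrac{\gamma L}{h_0}A_{i-1}$; by the very definition (\ref{hh0}) of $h_0$ one has $\tfrac{\gamma L}{h_0}=(1.464)^3\,\tfrac{\bar v h_0^2}{h}$, so the advective throughput of the depleted layer is the \emph{same} order in $\gamma$ (indeed, about $0.64$ times the sink). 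The Lévêque thickness $h_0$ is defined precisely by balancing cross-layer diffusion against along-layer advection, so these two terms can never be separated by an order of $\gamma$. Carried out honestly over $z\in(0,h)$, your balance reads $\tfrac23\bar v h\,(A_{i-1}-A_i)=\tfrac{\gamma L}{h_0}A_{i-1}-(A_i-\bar a_{i-1})\tfrac{2\bar v h_0^2}{h}+\cdots$, which yields a different constant in $\alpha$; you recover (\ref{Ai}) only because an order-one term was dropped.

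The paper avoids this by taking the control volume to be only the outer compartment $\Omega=\{z\in(h_0,h),\,y\in(y_{i-1,1},y_{i-1,2})\}$: since $\textbf{v}=v(z)\hat y$ there is no advective flux through the horizontal face $z=h_0$, so the only exchange with the depleted layer is the purely diffusive flux $\tfrac{\gamma}{h_0}(\bar a_{i-1}-A_{i-1})L+O(\gamma^{4/3})$ of (\ref{gamb2}), and the advective inflow/outflow in (\ref{gamb0}) and (\ref{for_eybaba}) are taken over $z\in(h_0,h)$ with $(h-h_0)v_1=\tfrac23\bar v h+O(\gamma^{2/3})$ (an error that only perturbs $\alpha$ at $O(\gamma^{4/3})$). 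This is consistent with the fact that the theorem's claim (\ref{jikjik}) is made only for $z\in(h_0,h-h_0)$: $A_i$ is the \emph{bulk} concentration feeding biosensor $i$'s outer compartment, not a flux-weighted average over the full height, and the depleted near-wall fluid is instead accounted for by restarting the inner compartment at $\bar a_i(t_i)=0$. To repair your argument you should either restrict the control volume as the paper does, or explicitly track the near-wall advective terms and show they cancel against the corresponding deficit at the inflow face --- they do not cancel at the leading edge of the array, so the former is the correct fix. A secondary remark: the paper does not merely assert that the deficit reaches biosensor $i$ undiluted; it solves the advection equation (\ref{hey3}) with the boundary condition (\ref{for_eybaba2}) explicitly and uses assumption (3) together with the concavity of $\bar a_{i-1}$ to bound the transient terms in (\ref{h_simple}) by $O(\gamma)$, which is where the uniform-in-$\gamma$ control you flag as ``the delicate part'' is actually supplied.
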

    \begin{proof} The proof is in Appendix \ref{multi_comp}. \end{proof}
    
   The implication of the above theorem is that the multi-compartment model provides an accurate description of the dynamics of the concentration of target molecules. In (\ref{eq7}), Theorem~\ref{multiple-comp}, $h_0$ is the height of the inner compartment above each biosensor and $\bar{a}_i(t)$ denotes the spatial average of concentration in the inner compartment of biosensor $i$. The non-negative constant $A_i$  denotes the concentration in the outer compartment of biosensor $i$. The ODEs of the two-compartment model in (\ref{a_1}) and (\ref{u_1-two-comp}) are generalized for biosensor $i$ in (\ref{eq7}) in this model. 
  \section{Asymptotic analysis of the least squares estimator of the concentration}
\label{s2}
With the above multi-compartment characterization of the concentration of target molecules, this section deals with estimating the initial concentration $A^*$. The estimation is formulated as a least squares problem for the multi-compartment model. Then the asymptotic behaviour of the estimator in terms of consistency and asymptotic normality is investigated.
 Also an approximate formula for the variance of the finite-sample estimator  is obtained. This allows us to evaluate qualitatively how the variance of the estimate varies with the number of biosensors.
 

The concentration $A_1=A^*$ at the inlet of the flow chamber is estimated using  non-linear regression. The estimate denoted by $\hat{A_1}$ is defined as
\begin{equation}\label{eq70}
\vspace{-2mm}
\textstyle
\hat{A_1}= \underset{{A_1}\in {R^+}}{\text{arg min }} S^{-1} \displaystyle\sum_{i=1}^N \displaystyle\sum_{k=1}^{S}\left ({m}_i^k-{g}_i (A_1,t^{i,k}) \right)^2 
\end{equation}
where $N$ and $S$ refer to the number of biosensors and the number of time samples, respectively. In (\ref{eq70}), $m_i^k$ refers to $m_i(t^{i,k})$ which, according to  (\ref{measure}), is the measurement of biosensor $i$ taken at time $t^{i,k}$. ${g}_i (A_1,t^{i,k})$ defined in (\ref{response}), is the response of biosensor $i$ at time $t^{i,k}$ when the concentration at the inlet is $A_1$. Regarding (\ref{response}), ${g}_i (A_1,t^{i,k})$ is a function of $\bar{\textbf{u}}_i(t)$. Thus, the estimate is the solution of the optimization problem (\ref{eq70}) together with (\ref{eq7}). 

The following definition, assumptions and theorems establish strong consistency and asymptotic normality of the least squares estimator for the concentration $A^*$. 

\begin{definition}
\label{tail}
Let ${f}=({f}_t)$ and ${g}=({g}_t)$ be two sequences of real valued functions on $\Theta$ and $h$ be a function on $\Theta \times \Theta$. If $n^{-1}\sum_{t=1}^n f_t(\alpha) g_t(\beta)$, as $n \to \infty$, converges uniformly to $h(\alpha,\beta)$ for all $\alpha$ and $\beta$ in $\Theta$, $h=[f,g]$ is called the tail cross product of $f$ and $g$. 
\end{definition} 

Assume that:

(a) a sequence of real valued $N \times 1$ vectors $\textbf{y}_t$ has the structure $\textbf{y}_t=\textbf{f}_t(\theta_0)+\textbf{e}_t$ for $t=1,2,3,\ldots$ where the elements of the vector function $\textbf{f}_t$, denoted by $f_{i,t}$ for $i=1,2,\ldots ,N$, are known continuous functions on a compact subset $\Theta$ of a Euclidean space and the vectors $\textbf{e}_t$ for $t=1,2,3,\ldots$ are independent identically distributed with zero mean and finite covariance matrix $\sigma^2 I_N>0$. Here, $I_N$ denotes the $N \times N$ identity matrix.

(b) considering Definition~\ref{tail}, the tail cross product of $f_i=(f_{i,t})$, denoted by $[f_{i},f_{i}]$, for $i=1,2,\ldots ,N$ exists.  
 Besides, $Q(\theta)=\lim_{n \to \infty} n^{-1}  \sum_{t=1}^n \left | \textbf{f}_t(\theta_0)-\textbf{f}_t(\theta)\right |^2$ has a unique minimum at $\theta=\theta_0$. Here, $|.|^2$ denotes the L-2 vector norm.
 
 Any vector $\hat{\theta}_n$ in $\Theta$ which minimizes $Q_n(\theta)=n^{-1} \sum_{t=1}^n \left| \textbf{y}_t-\textbf{f}_t(\theta)\right|^2$, is a least squares estimate of $\theta_0$ based on  the first $n$ values of $\textbf{y}_t$. The following theorem establishes strong consistency of $\hat{\theta}_n$. 
 
 \begin{theorem}
\label{consis_in_paper}
Suppose that $ (\hat{\theta}_n )$ is a sequence of least squares estimators. Under assumptions (a) and (b), $\hat{\theta}_n$ is a strongly consistent estimator of $\theta_0$.
\end{theorem}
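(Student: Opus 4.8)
The plan is the standard argmin route for $M$-estimators, the only non-routine ingredient being a uniform strong law of large numbers for the stochastic cross term, which I handle through the pseudometric induced by the tail cross products. Write $\textbf{d}_t(\theta)=\textbf{f}_t(\theta_0)-\textbf{f}_t(\theta)$, so that $\textbf{y}_t-\textbf{f}_t(\theta)=\textbf{e}_t+\textbf{d}_t(\theta)$ and hence $Q_n(\theta)=E_n+2W_n(\theta)+D_n(\theta)$ with $E_n=n^{-1}\sum_{t=1}^n|\textbf{e}_t|^2$, $W_n(\theta)=n^{-1}\sum_{t=1}^n \textbf{e}_t^{\top}\textbf{d}_t(\theta)$ and $D_n(\theta)=n^{-1}\sum_{t=1}^n|\textbf{d}_t(\theta)|^2$. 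By the strong law of large numbers and assumption (a), $E_n\to N\sigma^2$ almost surely, a quantity independent of $\theta$. Expanding the square in $D_n(\theta)$ componentwise and using assumption (b) together with Definition~\ref{tail} (each $n^{-1}\sum_t f_{i,t}(\alpha)f_{i,t}(\beta)$ converges uniformly on $\Theta\times\Theta$ to $[f_i,f_i](\alpha,\beta)$), one gets $D_n(\theta)\to Q(\theta)=\sum_{i=1}^N\big([f_i,f_i](\theta_0,\theta_0)-2[f_i,f_i](\theta_0,\theta)+[f_i,f_i](\theta,\theta)\big)$ uniformly in $\theta$, and $Q$ is continuous (a finite combination of uniform limits of continuous functions) with $Q(\theta_0)=0$ and, by assumption (b), a unique minimum at $\theta_0$.

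The crux is to show $\sup_{\theta\in\Theta}|W_n(\theta)|\to0$ almost surely. Pointwise in $\theta$ this follows from Kolmogorov's strong law for independent summands: the terms $\textbf{e}_t^{\top}\textbf{d}_t(\theta)$ are independent with mean zero and variance $\sigma^2|\textbf{d}_t(\theta)|^2$, and since $\sum_{t=1}^n|\textbf{d}_t(\theta)|^2=nD_n(\theta)=O(n)$ an elementary Abel-summation argument gives $\sum_{t\ge1}|\textbf{d}_t(\theta)|^2/t^2<\infty$, which is Kolmogorov's condition. To upgrade this to uniformity I do not use the Euclidean modulus of continuity of the $\textbf{f}_t$ (which need not be equicontinuous); instead I introduce the pseudometric $\rho(\alpha,\beta)=\lim_n n^{-1}\sum_{t=1}^n|\textbf{f}_t(\alpha)-\textbf{f}_t(\beta)|^2$, which exists uniformly in $(\alpha,\beta)$ and is continuous with $\rho(\alpha,\alpha)=0$, again by assumption (b). Given $\eta>0$, compactness of $\Theta$ and continuity of $\rho$ give a $\delta>0$ with $|\alpha-\beta|<\delta\Rightarrow\rho(\alpha,\beta)<\eta^2$; cover $\Theta$ by finitely many Euclidean $\delta$-balls with centres $\theta^{(1)},\dots,\theta^{(M)}$. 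For $\theta$ in the $j$-th ball, Cauchy--Schwarz gives $|W_n(\theta)-W_n(\theta^{(j)})|\le E_n^{1/2}\big(n^{-1}\sum_t|\textbf{f}_t(\theta)-\textbf{f}_t(\theta^{(j)})|^2\big)^{1/2}\le E_n^{1/2}(\rho(\theta,\theta^{(j)})+o(1))^{1/2}\le E_n^{1/2}(\eta^2+o(1))^{1/2}$, uniformly over the ball (the $o(1)$ comes from the uniform convergence to $\rho$). Combined with $\max_j|W_n(\theta^{(j)})|\to0$ a.s.\ (the pointwise result at finitely many points), this yields $\limsup_n\sup_\theta|W_n(\theta)|\le (N\sigma^2)^{1/2}\eta$ a.s.; letting $\eta\downarrow0$ along a sequence gives $\sup_\theta|W_n(\theta)|\to0$ a.s.

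Putting the three pieces together, $Q_n(\theta)\to N\sigma^2+Q(\theta)$ almost surely, uniformly on $\Theta$. The conclusion is then the usual identification argument: fix an open neighbourhood $U$ of $\theta_0$; since $Q$ is continuous on the compact set $\Theta$ with $\theta_0$ its unique minimiser, $\inf_{\theta\in\Theta\setminus U}Q(\theta)=Q(\theta_0)+2\varepsilon$ for some $\varepsilon>0$. On the almost-sure event, for all large $n$ we have $\sup_\theta|Q_n(\theta)-N\sigma^2-Q(\theta)|<\varepsilon$, so $Q_n(\hat\theta_n)\le Q_n(\theta_0)<N\sigma^2+Q(\theta_0)+\varepsilon$ while $Q_n(\theta)>N\sigma^2+Q(\theta_0)+\varepsilon$ for every $\theta\in\Theta\setminus U$; hence $\hat\theta_n\in U$ eventually, and since $U$ is arbitrary, $\hat\theta_n\to\theta_0$ almost surely. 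The main obstacle is the uniform strong law for $W_n$ in the second paragraph; everything else is bookkeeping once the tail-cross-product pseudometric is used for the covering rather than the ambient metric.
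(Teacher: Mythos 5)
Your proof is correct. Note that the paper does not actually supply an argument for this theorem: it simply cites the classical nonlinear least squares consistency result of the reference \cite{asymptotic} (Jennrich) and remarks that the extension from scalar- to vector-valued $\textbf{f}_t$ is routine. Your write-up is essentially a self-contained reconstruction of that cited proof: the decomposition $Q_n=E_n+2W_n+D_n$, the uniform convergence of $D_n$ to $Q$ via the tail cross products, and the standard argmin identification step are all as in the classical treatment, and the one genuinely delicate step --- the uniform strong law for the cross term $W_n$ --- is handled correctly by combining Kolmogorov's SLLN at finitely many points (the Abel-summation verification of $\sum_t|\textbf{d}_t(\theta)|^2/t^2<\infty$ is right) with a covering argument in the pseudometric $\rho$ induced by the tail cross products rather than in the ambient Euclidean metric, which is exactly the device that makes assumption (b) sufficient without any equicontinuity hypothesis on the $\textbf{f}_t$. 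The only cosmetic caveat is that the finite cover, and hence the almost-sure event, depends on $\eta$; you handle this correctly by intersecting over a sequence $\eta\downarrow 0$. In short, your proposal fills in, correctly and in full, what the paper delegates to a citation.
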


\begin{proof}
The proof for a scalar valued function $f$ is given in \cite{asymptotic}. Extending the proof to a scalar valued function is straightforward.
\end{proof}

To establish the asymptotic normality of a sequence of least squares estimators of a scalar parameter, we need the derivatives ${f}_{i,t}^{\prime}(\theta)=(\partial / \partial \theta){f}_{i,t}(\theta)$ and ${f}^{\prime\prime}_{i,t}(\theta)=(\partial^2 / \partial \theta^2){f}_{i,t}(\theta)$ for $i=1,2,\ldots ,N$ and $t=1,2,3,\ldots $. Suppose that:

(c)  the derivatives ${f}_{i,t}^{\prime}$ and ${f}^{\prime\prime}_{i,t}$ for $i=1,2,\ldots ,N$ exist and are continuous on $\Theta$ and that all tail cross products $[f_i,f_i^{\prime}]$, $[f_i,f_i^{\prime\prime}]$, and $[f_i^{\prime},f_i^{\prime\prime}]$ for $i=1,2,\ldots ,N$ exist. 

(d) for each $\theta$ in $\Theta$, $a(\theta)$ is defined as
$a(\theta)= \lim_{n \to \infty} n^{-1}\sum_{t=1}^{n} \sum_{i=1}^N \left ( {f}_{i,t}^{\prime}(\theta) \right )^2$. The true parameter $\theta_0$ is an interior point of $\Theta$ and $a(\theta_0)$ is not zero.

The following theorem provides conditions for the asymptotic normality of a sequence of least squares estimators.

\begin{theorem}
\label{asym}
Suppose that $(\hat{\theta}_n)$ is a sequence of least squares estimators of a scalar parameter $\theta_0$. Under assumptions (a) through (d), $\hat{\theta}_n-\theta_0$ is asymptotically normal, i.e. $\sqrt{n} \left ( \hat{\theta}_n-\theta_0 \right) \to N(0, \sigma ^2 a(\theta_0)^{-1})$.
\end{theorem}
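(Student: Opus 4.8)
The plan is to run the classical nonlinear least-squares normality argument: exploit the first-order optimality condition satisfied by $\hat\theta_n$, expand the gradient of the criterion around $\theta_0$ by the mean value theorem, and then handle the resulting numerator with a central limit theorem and the denominator with a (uniform) strong law. Write $Q_n(\theta)=n^{-1}\sum_{t=1}^n\sum_{i=1}^N\bigl(y_{i,t}-f_{i,t}(\theta)\bigr)^2$, so that $Q_n'(\theta)=-2n^{-1}\sum_{t=1}^n\sum_{i=1}^N\bigl(y_{i,t}-f_{i,t}(\theta)\bigr)f_{i,t}'(\theta)$ and $Q_n''(\theta)=2n^{-1}\sum_{t=1}^n\sum_{i=1}^N\bigl[(f_{i,t}'(\theta))^2-\bigl(y_{i,t}-f_{i,t}(\theta)\bigr)f_{i,t}''(\theta)\bigr]$. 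By Theorem~\ref{consis_in_paper}, $\hat\theta_n\to\theta_0$ almost surely; since $\theta_0$ is an interior point of $\Theta$ by assumption (d), with probability one $\hat\theta_n$ is interior for all large $n$, hence $Q_n'(\hat\theta_n)=0$ eventually. A one-term Taylor expansion gives $0=Q_n'(\hat\theta_n)=Q_n'(\theta_0)+Q_n''(\tilde\theta_n)(\hat\theta_n-\theta_0)$ for some $\tilde\theta_n$ on the segment joining $\hat\theta_n$ and $\theta_0$, so that $\sqrt n(\hat\theta_n-\theta_0)=-\sqrt n\,Q_n'(\theta_0)\big/Q_n''(\tilde\theta_n)$ as soon as the denominator is shown to be bounded away from zero.

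For the numerator, substitute $y_{i,t}-f_{i,t}(\theta_0)=e_{i,t}$ to obtain $\sqrt n\,Q_n'(\theta_0)=-2\,n^{-1/2}\sum_{t=1}^n\xi_{n,t}$ with $\xi_{n,t}=\sum_{i=1}^N e_{i,t}f_{i,t}'(\theta_0)$. The $\xi_{n,t}$ are independent in $t$ with zero mean, and because $\operatorname{cov}(\mathbf e_t)=\sigma^2 I_N$ we get $\operatorname{Var}(\xi_{n,t})=\sigma^2\sum_{i=1}^N(f_{i,t}'(\theta_0))^2$, hence $n^{-1}\sum_{t=1}^n\operatorname{Var}(\xi_{n,t})\to\sigma^2 a(\theta_0)$ by assumption (d). I would then invoke the Lindeberg--Feller central limit theorem for the triangular array $\{n^{-1/2}\xi_{n,t}\}$ to conclude $\sqrt n\,Q_n'(\theta_0)\xrightarrow{d}N(0,4\sigma^2 a(\theta_0))$; the Lindeberg condition is checked using finiteness of the noise variance together with the tail-cross-product hypothesis on $f_i'$ in (c), which forces the Cesàro averages of $(f_{i,t}'(\theta_0))^2$ to converge so that no single term dominates.

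For the denominator, I would show $Q_n''(\tilde\theta_n)\to 2a(\theta_0)$ almost surely. Split $Q_n''(\theta)$ using $y_{i,t}-f_{i,t}(\theta)=\bigl(f_{i,t}(\theta_0)-f_{i,t}(\theta)\bigr)+e_{i,t}$ into three Cesàro averages: $n^{-1}\sum_{t,i}(f_{i,t}'(\theta))^2$, which by continuity and the relevant tail cross products converges, uniformly in $\theta$, to $a(\theta)$; $n^{-1}\sum_{t,i}\bigl(f_{i,t}(\theta_0)-f_{i,t}(\theta)\bigr)f_{i,t}''(\theta)$, which converges uniformly to a combination of the tail cross products $[f_i,f_i'']$ guaranteed by (c) and vanishes at $\theta=\theta_0$; and $n^{-1}\sum_{t,i}e_{i,t}f_{i,t}''(\theta)$, which tends to $0$ almost surely, uniformly in $\theta$, by a strong law for independent noise weighted by the Cesàro-bounded functions $f_{i,t}''$. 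Because these limits are uniform on $\Theta$, $a(\cdot)$ is continuous, and $\tilde\theta_n\to\theta_0$ almost surely, we may substitute the random argument to get $Q_n''(\tilde\theta_n)\to 2a(\theta_0)\neq 0$. Slutsky's theorem then gives $\sqrt n(\hat\theta_n-\theta_0)\xrightarrow{d}N\bigl(0,\,4\sigma^2 a(\theta_0)/(2a(\theta_0))^2\bigr)=N\bigl(0,\sigma^2 a(\theta_0)^{-1}\bigr)$.

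I expect the main obstacle to be the two limiting steps in the non-i.i.d. setting: verifying the Lindeberg condition for the triangular array in the numerator, and upgrading the pointwise tail-cross-product convergences to convergence uniform in $\theta$ (together with a uniform strong law for the noise-weighted term $n^{-1}\sum_{t,i}e_{i,t}f_{i,t}''(\theta)$) so that evaluation at the data-dependent intermediate point $\tilde\theta_n$ is legitimate. The remaining bookkeeping — tracking the factors of $2$ and assembling the variance — is routine once these ingredients are in place. As with Theorem~\ref{consis_in_paper}, the scalar-parameter analysis is essentially that of \cite{asymptotic}; the vector-of-responses structure with covariance $\sigma^2 I_N$ enters only through the computation $\operatorname{Var}(\xi_{n,t})=\sigma^2\sum_{i=1}^N(f_{i,t}'(\theta_0))^2$, which is what produces the matrix-free limit $\sigma^2 a(\theta_0)^{-1}$.
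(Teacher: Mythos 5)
Your proposal is correct and is essentially the argument the paper relies on: the paper's proof of this theorem is just a citation to \cite{asymptotic} (the classical Jennrich-style nonlinear least squares normality result) with the remark that the extension to vector-valued responses is straightforward, and your first-order-condition/Taylor-expansion/Lindeberg--Feller/uniform-SLLN/Slutsky chain is precisely that classical argument, with the vector structure entering only through $\operatorname{Var}(\xi_{n,t})=\sigma^2\sum_{i=1}^N(f_{i,t}'(\theta_0))^2$ as you note. The variance bookkeeping ($4\sigma^2 a(\theta_0)/(2a(\theta_0))^2=\sigma^2 a(\theta_0)^{-1}$) checks out.
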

\begin{proof}
The proof for a scalar function $f$ is given in \cite{asymptotic}. Extending the proof to vector valued functions is straightforward.
\end{proof}

Using Theorem~\ref{consis_in_paper} and Theorem~\ref{asym}, strong consistency and asymptotic normality of the estimator $\hat{A}_1$, defined in (\ref{eq70}), is established in Theorem~\ref{consist} below. 

\begin{theorem}
\label{consist}
Consider the observation model (\ref{measure}), dynamics (\ref{eq7}), the relation of the biosensor response with the concentration of species in (\ref{response}) and the recursion (\ref{Ai}). Assume that the noise samples in (\ref{measure}) are independent identically distributed with zero mean and finite variance $\sigma^2$. Then the estimate $\hat{A}_1$, defined in (\ref{eq70}), has the following asymptotic properties:

\noindent 1. $\hat{A}_1$ is strongly consistent as the time sample size $S$ in (\ref{eq70}) grows.

\noindent 2. The estimation error $\hat{A_1}-A^*$ is asymptotically normal as $S \to \infty$;
\begin{align}\label{gamma}
\hspace{-2mm}\sqrt{S} \left( \hat{A_1} - A^*\right)\to N(0, \frac{\sigma^2}{\Gamma}),~
\Gamma = \lim_{S \to \infty } \frac{1}{S} \displaystyle\sum_{k=1}^S \displaystyle\sum_{i=1}^N \alpha^{2i-2} \left[ \frac{\partial{{g} (\alpha ^ {i-1} A^*,t^{i,k}-t_i)}}{\partial{A}} \right]^2,
\end{align}
where $g(A,t)$, defined in (\ref{response}), is the response of each biosensor when the concentration in its outer compartment is $A$. $\partial{{g} (\alpha^{i-1}A^*,t^{i,k}-t_i)}/\partial{A}$ is the value of $\partial{{g} (A,t)}/\partial{A}$ at $A=\alpha^{i-1} A^*$ and $t=t^{i,k}-t_i$. 
\end{theorem}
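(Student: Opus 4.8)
The plan is to derive Theorem~\ref{consist} from the general results Theorem~\ref{consis_in_paper} and Theorem~\ref{asym} by checking that the concrete model satisfies assumptions (a)--(d). The first step is to set up the dictionary between the two settings: take $n \leftrightarrow S$, the scalar parameter $\theta_0 \leftrightarrow A^*$, the parameter set $\Theta$ a compact subset of $R^+$ with $A^*$ in its interior, the observation vector $\textbf{y}_t \leftrightarrow (m_1^k,\ldots,m_N^k)^\top$, the regression function $\textbf{f}_t(\theta) \leftrightarrow \big(g_1(A_1,t^{1,k}),\ldots,g_N(A_1,t^{N,k})\big)^\top$, and the error $\textbf{e}_t \leftrightarrow (n_1^k,\ldots,n_N^k)^\top$. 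Since the $n_i^k$ are i.i.d.\ with zero mean and variance $\sigma^2$ and independent across biosensors, $\textbf{e}_t$ has covariance $\sigma^2 I_N$, as required by (a), and the minimization defining $\hat{A}_1$ in (\ref{eq70}) is exactly the least squares problem of the general framework.

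Next I would establish the smoothness in (a) and (c), namely that each $f_{i,k}(A_1)=g_i(A_1,t^{i,k})$ is continuous and twice continuously differentiable in $A_1$. By the recursion (\ref{Ai}) the outer-compartment concentration of biosensor $i$ is $A_i=\alpha^{i-1}A_1$, a smooth function of $A_1$. Since $R(\cdot,\cdot)$ and $G(\cdot,\cdot)$ are smooth, the solution $(\bar{a}_i(t),\bar{\textbf{u}}_i(t))$ of (\ref{eq7}) depends smoothly on $A_i$ by the standard theorems on differentiable dependence of ODE solutions on parameters; the variational (sensitivity) equations obtained by differentiating (\ref{eq7}) in $A_i$ produce $f'_{i,k}$ and $f''_{i,k}$. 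Composing with the spatial averaging and the transducer $F$ in (\ref{response}) preserves continuity and differentiability, so $g_i(\cdot,t^{i,k})\in C^2(\Theta)$, which gives (a) and the regularity part of (c).

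The substantive step is the limiting conditions: existence of the tail cross products $[f_i,f_i]$ in (b) and $[f_i,f_i']$, $[f_i,f_i'']$, $[f_i',f_i'']$ in (c); uniqueness of the minimizer of $Q(\theta)$ at $A^*$ in (b); and $a(A^*)\neq 0$ in (d). Existence of the Ces\`aro limits holds provided the sampling times $\{t^{i,k}\}$ inside the validity window $(t_i,t^*)$ are regular enough that the empirical time-distributions converge (e.g.\ periodic sampling), since then $S^{-1}\sum_k\phi(t^{i,k})$ converges for every continuous $\phi$ and $g_i$ and its $A_1$-derivatives are uniformly bounded on the compact $\Theta$. The uniqueness-of-minimum and $a(A^*)\neq 0$ conditions are the identifiability of the model: they hold because $g_i(\cdot,t)$ is strictly monotone in its concentration argument --- the response $F(\bar{\textbf{u}}_i)$ is increasing in the driving concentration $A_i=\alpha^{i-1}A_1$, with $\alpha\in(0,1)$ and $A_1>0$ --- so distinct $A_1$ give pointwise distinct response trajectories; consequently $Q(A_1)=\lim_S S^{-1}\sum_k\sum_i |f_{i,k}(A^*)-f_{i,k}(A_1)|^2$ vanishes only at $A_1=A^*$, and $a(A^*)=\lim_S S^{-1}\sum_k\sum_i (f'_{i,k}(A^*))^2>0$ because not all sensitivities $\partial g_i/\partial A_1$ can vanish along the sampling sequence.

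Finally I would assemble the conclusion. With (a) and (b) verified, Theorem~\ref{consis_in_paper} gives strong consistency of $\hat{A}_1$ as $S\to\infty$, which is part~1. With (a)--(d) verified, Theorem~\ref{asym} gives $\sqrt{S}(\hat{A}_1-A^*)\to N(0,\sigma^2 a(A^*)^{-1})$. It remains to identify $a(A^*)$ with $\Gamma$: by (\ref{response}) and (\ref{Ai}) the response of biosensor $i$ equals $g$ evaluated at outer concentration $\alpha^{i-1}A_1$ and elapsed time $t^{i,k}-t_i$, so $f_{i,k}(A_1)=g(\alpha^{i-1}A_1,t^{i,k}-t_i)$ and by the chain rule $f'_{i,k}(A^*)=\alpha^{i-1}\,\partial_A g(\alpha^{i-1}A^*,t^{i,k}-t_i)$; squaring, summing over $i$ and $k$, and passing to the limit yields $a(A^*)=\lim_{S\to\infty}S^{-1}\sum_{k=1}^S\sum_{i=1}^N \alpha^{2i-2}\big[\partial g(\alpha^{i-1}A^*,t^{i,k}-t_i)/\partial A\big]^2=\Gamma$, which is (\ref{gamma}) and proves part~2. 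The main obstacle is the identifiability argument of the third step --- establishing uniqueness of the minimizer of $Q$ and $a(A^*)\neq 0$ --- together with pinning down the sampling-regularity hypothesis under which the tail cross products exist; the remainder is routine ODE parameter-smoothness and bookkeeping.
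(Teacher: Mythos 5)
Your proposal is correct and follows essentially the same route as the paper: map the problem onto the general least-squares framework, verify conditions (a)--(d), invoke Theorems~\ref{consis_in_paper} and~\ref{asym}, and then identify $a(A^*)$ with $\Gamma$ via $g_i(A_1,t)=g(\alpha^{i-1}A_1,t-t_i)$ and the chain rule. The only noticeable difference is in justifying the tail cross products: you appeal to regularity of the sampling times (convergence of the empirical time-distribution), whereas the paper uses the fact that $g_i(A,t)\to g_{e,i}(A)$ as the chemistry equilibrates, so the Ces\`aro averages converge to products of the steady-state responses; both serve the same purpose and neither changes the structure of the argument.
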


\begin{proof} The proof is given in Appendix~\ref{consistproof}. \end{proof}

\begin{corollary}
\label{cor}
Consider the observation model (\ref{measure}), dynamics (\ref{eq7}), the relation of the biosensor response with the concentration of species in (\ref{response}) and the recursion (\ref{Ai}). Assume that the noise samples in (\ref{measure}) are independent identically distributed with zero mean and finite variance $\sigma^2$. Suppose that $\sigma_{S,N}^2$ denotes the variance of the finite sample estimator $\hat{A}_1$, defined in (\ref{eq70}) for finite $S$. Then,
\begin{align}\label{approx_variance}
\hspace{-2mm}\sigma_{S,N}^2 \approx {\sigma^2}/ {\sum_{i=1}^{N}d_i },~
 d_i=\alpha^{2i-2}\displaystyle\sum_{k=1}^{S} \left[ \frac{\partial{{g} (\alpha ^{i-1}A^*,t^{i,k}-t_i)}}{\partial{A}} \right]^2,
 \end{align}
where 
$\frac{\partial g}{\partial{A}} (\alpha^{i-1}A^*,t^{i,k}-t_i)$ is the value of $\partial{{g} (A,t)}/\partial{A}$ at $A=\alpha^{i-1} A^*$ and $t=t^{i,k}-t_i$. 
\end{corollary}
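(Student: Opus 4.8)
The plan is to obtain Corollary~\ref{cor} directly from the asymptotic normality established in Theorem~\ref{consist} by ``freezing the limit'' at the finite sample size $S$. Theorem~\ref{consist} states that $\sqrt{S}\,(\hat{A}_1 - A^*) \to N(0,\sigma^2/\Gamma)$, where $\Gamma = \lim_{S\to\infty} \frac{1}{S}\sum_{k=1}^{S}\sum_{i=1}^{N}\alpha^{2i-2}\bigl[\partial g(\alpha^{i-1}A^*,t^{i,k}-t_i)/\partial A\bigr]^2$. Hence for large $S$ the estimator $\hat{A}_1$ is approximately normal with mean $A^*$ and variance $\sigma^2/(S\Gamma)$, so that $\sigma_{S,N}^2 \approx \sigma^2/(S\Gamma)$.

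The next step is to replace $S\Gamma$ by the pre-limit quantity $\sum_{k=1}^{S}\sum_{i=1}^{N}\alpha^{2i-2}\bigl[\partial g(\alpha^{i-1}A^*,t^{i,k}-t_i)/\partial A\bigr]^2$, i.e. to drop the outer limit and the $1/S$ normalization in the definition of $\Gamma$; this is legitimate for $S$ large because the Ces\`aro average converges to $\Gamma$ under the tail-cross-product hypotheses (assumptions (b)--(d)) used in Theorem~\ref{asym}. Interchanging the two finite sums and factoring out $\alpha^{2i-2}$ then gives $S\Gamma \approx \sum_{i=1}^{N}\alpha^{2i-2}\sum_{k=1}^{S}\bigl[\partial g(\alpha^{i-1}A^*,t^{i,k}-t_i)/\partial A\bigr]^2 = \sum_{i=1}^{N} d_i$, which yields $\sigma_{S,N}^2 \approx \sigma^2/\sum_{i=1}^{N} d_i$ as claimed in (\ref{approx_variance}). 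An equivalent and more self-contained derivation repeats the linearization behind Theorem~\ref{asym}: using $g_i(A_1,t) = g(\alpha^{i-1}A_1,\,t-t_i)$ (a consequence of the recursion (\ref{Ai}) and the fact from Theorem~\ref{multiple-comp} that biosensor $i$ obeys a two-compartment model with outer concentration $A_i = \alpha^{i-1}A_1$), the chain rule gives $\partial g_i/\partial A_1 = \alpha^{i-1}\,\partial g/\partial A$ at $A=\alpha^{i-1}A^*$, $t=t^{i,k}-t_i$; linearizing the residuals in (\ref{eq70}) about $A_1=A^*$ and solving the resulting normal equation gives $\hat{A}_1 - A^* \approx \bigl(\sum_{i,k}(\partial g_i/\partial A_1)^2\bigr)^{-1}\sum_{i,k}(\partial g_i/\partial A_1)\,n_i^k$, whose variance is $\sigma^2\bigl(\sum_{i,k}(\partial g_i/\partial A_1)^2\bigr)^{-1} = \sigma^2/\sum_{i=1}^{N} d_i$ since $\sum_{k}(\partial g_i/\partial A_1)^2 = \alpha^{2i-2}\sum_{k}[\partial g/\partial A]^2 = d_i$.

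The main obstacle --- and the reason the statement is an approximation rather than an identity --- is controlling the remainder for finite $S$: the linearization discards an $O((\hat{A}_1 - A^*)^2)$ term in each residual, and replacing $\Gamma$ by its pre-limit sum discards the tail of the Ces\`aro average; both errors vanish as $S\to\infty$ (by the strong consistency of $\hat{A}_1$ from Theorem~\ref{consist} and the existence of the tail cross products), but neither is quantified here. A fully rigorous finite-sample bound would require a Berry--Esseen-type estimate or an explicit remainder term; for the purposes of this paper it is enough to record that $\sigma_{S,N}^2 = \sigma^2/(S\Gamma) + o(1/S)$ and $S\Gamma = \sum_{i=1}^{N} d_i + o(S)$, which together justify (\ref{approx_variance}).
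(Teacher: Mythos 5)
Your proposal is correct and follows essentially the same route as the paper, which simply states that the corollary "follows straightforwardly from Theorem~\ref{consist}": you approximate the finite-sample variance by $\sigma^2/(S\Gamma)$ from the asymptotic normality and then replace $S\Gamma$ by the pre-limit double sum, which is exactly $\sum_{i=1}^{N} d_i$. The additional linearization argument and the honest remarks about the unquantified remainder go beyond what the paper records but are consistent with its intent.
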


The proof of Corollary~\ref{cor} follows straightforwardly from Theorem~\ref{consist}. Corollary~\ref{cor} shows how the variance of the estimate of the concentration $A^*$ varies with the number of biosensors. In the experiments involving the protein-based biosensor described in Sec.\ref{result}, the approximation (\ref{approx_variance}) is used to explain how the estimation variance varies with the number of biosensors. 
\section{Case-study: Ion channel biosensor}
\label{result}
In this section, the multi-compartment model of Sec.\ref{subsecB} is evaluated for a protein-based biosensor, namely the ion channel switched (ICS) biosensor that was constructed and described in \cite{bib5}. ICS biosensor is a generic biosensor that can detect low molecular weight drugs, large proteins and micro-organisms \cite{bib5} with low concentrations as low as 10 fMolar~\cite{bib6}. This biosensor incorporates artificial ion channels in a lipid bilayer. The PDE model of Sec.\ref{sec:II} is specified for this biosensor by describing the  corresponding chemical reactions and measurement equation in Sec.\ref{describe_ICS}. 
In Sec.\ref{sim_res_sec}, the multi-compartment model (\ref{eq7}) is applied to an ICS biosensor array. Comparison between its response with the response of the PDE shows that the multi-compartment model describes the system accurately. 
Finally,  in Sec.\ref{estimation_res}, numerical examples are given for estimating the concentration $A^*$.

\subsection{Dynamics of the flow on (Ion Channel Switch) ICS biosensor}
\label{describe_ICS}
In this section,  the operation of the ICS biosensor is outlined. Then, the PDE model of Sec.\ref{sec:II} is constructed for a linear array of ICS biosensors.

{\bf Chemical Dynamics}: Recall Sec.\ref{s1} gave a generic description of chemical reactions in (\ref{u_n}). Here, the specific chemical dynamics on ICS biosensor are described. More details on the construction and operation of this biosensor can be found in \cite{bib6} and \cite{bib5}.  To specify chemical dynamics, we first outline briefly  the structure and operation of the ICS biosensor. The ICS biosensor is a surface based biosensor comprised of a lipid bilayer where ion channels are infused. The inner lipid layer is tethered to a gold substrate. The ion channels within this layer are tethered whereas the ones in the outer layer diffuse freely. The flow of ions through a channel only occurs when a mobile channel in the outer layer aligns to a fixed channel in the inner layer to form a conducting dimer. The arrival of target molecule cross-links
antibodies attached to the mobile outer layer channels, to
those attached to tethered lipids. This anchors
the channels distant, on average, from their inner layer
partners. The expected number of dimers is thus decreased. The conductance of the biosensor is proportional to the concentration of the dimers. Therefore the arrival of target molecules results in decreasing the biosensor admittance. There are eight chemical species on the ICS biosensor \cite{bib6}. Therefore, the vector of concentration of chemical species $\textbf{u}$ has eight elements. The primary species include binding site \emph{b} with concentration $B$, free moving ion channel \emph{c} with concentration $C$, tethered ion channel \emph{s} with concentration $S$, and dimer \emph{d} with concentration $D$. Initially, free moving ion channels \emph{c}, tethered channel \emph{s}, dimers \emph{d} are in equilibrium through a reversible chemical reaction. The arrival of target molecules initiates six other reactions and the equilibrium shifts towards decreasing the dimer concentration. The target molecule binds to the primary species to form complexes \emph{w}, \emph{x}, \emph{y}, and \emph{z} with concentrations $W$, $X$, $Y$, and $Z$ according to the following chemical reactions
\vspace{-1mm}
\begin{align}\label{reactions}
\begin{array}{cccc}
a+b\rightleftharpoons ^{f_1}_{r_1} w  \quad &  a+c\rightleftharpoons ^{f_2}_{r_2} x \quad & w+c\rightleftharpoons ^{f_3}_{r_3} y \quad &
x+b\rightleftharpoons ^{f_4}_{r_4}y  \\ 
 c+s \rightleftharpoons^{f_5}_{r_5}  d \quad  &              a+d\rightleftharpoons ^{f_6}_{r_6} z \quad & x+s \rightleftharpoons ^{f_7}_{r_7} z  \quad &
\end{array}
\end{align}
Here, $f_j$ and $r_j$ for $j = {1, 2,\ldots , 7}$, respectively
denote the forward and backward reaction rate constants. The corresponding rate equations for the reactions (\ref{reactions}) are
\begin{align} \label{rate eq}
\begin{array}{cccc}
R_1 = f_1AB - r_1W  & R_2 = f_2AC - r_2X  &
R_3 = f_3WC - r_3Y  & R_4 = f_4XB - r_4Y \\ 
R_5 = f_5CS - r_5D  & R_6 = f_6AD - r_6Z &
R_7 = f_7XS - r_7Z  &
\end{array}
\end{align}
Define the vector of concentration of chemical species as $\textbf{u} = { \left [B, C, D, S, W, X, Y, Z  \right ]}^T$  and  
\begin{equation}\label{r(u,A)}
f(\textbf{u},A) =
{\left [ R_1,R_2,R_3,R_4,R_5,R_6,R_7 \right ]}^{T},
\end{equation}
where $(\cdot)^T$ denotes transpose. The variations of the concentration of species on biosensor $i$ can be expressed as \cite{cdc}
\vspace{-2mm}
\begin{equation}\label{new odes}
\frac{d\textbf{u}_i}{dt}=Mf(\textbf{u}_i,A) \quad \text{for} \quad t > t_i,\quad \textbf{u}_i(t_i)=\textbf{u}_0,\quad i=1,2,\ldots ,N.
\end{equation}
\vspace{-1mm}
where  $M$ is a $7 \times 7$ constant matrix obtained as \cite{cdc}. 
\begin{equation}\label{M}
M=\left [
\begin{array}{ccccccc}
-1&0&0&-1&0&0&0\\
0&-1&-1&-1&0&0&0\\
0&0&0&0&1&-1&0\\
0&0&0&0&-1&0&-1\\
1&0&-1&0&0&0&0\\
0&1&0&-1&0&0&-1\\
0&0&1&1&0&0&0\\
0&0&0&0&0&1&1\\
\end{array}
\right ]
\end{equation}
  The index $i$ in $\textbf{u}_i=\left [B_i~C_i~D_i~S_i~W_i~X_i~Y_i~Z_i \right ]$ refers to biosensor $i$. From the reactions (\ref{reactions}), the rate of adsorption of target molecules on the surface of biosensor $i$ can be written as $R(A,\textbf{u}_i)=f_1AB_i +f_2AC_i+f_6AD_i - r_1W_i- r_2X_i - r_6Z_i$. Thus the corresponding boundary condition in (\ref{R}) can be expressed as
\begin{eqnarray}\label{boundary_on_ICS}
\left . \gamma \frac{\partial A}{\partial z}\right |_{z=0}=A q^T \textbf{u}_i - p^T \textbf{u}_i\ , \quad i=1,2,\ldots ,N,
 \end{eqnarray}
where $q$ and $p$ are vectors which can be defined as $q=[f_1 ~ f_2 ~ f_6 ~ 0 ~ 0 ~0~ 0 ~ 0]^T$ and $p=[0 ~0 ~ 0 ~ 0 ~ r_1 ~r_2~ 0 ~ r_6]^T$.
Applying a small alternative potential between the gold
substrate and a reference electrode in the test solution generates a charge at the gold surface which causes
electrons flow through ion channels \cite{bib5}. The measured current in the external circuit is proportional to the surface average of dimer concentration. Denoting the average dimer concentration of biosensor $i$ by $\bar{D}_i(t)$, the observation equation, on biosensor $i$, can be written as $m_i(t)=\bar{D}_i(t)+n_i(t)$.
 %
%
%
Physical reality demands that the derived PDE model defined by (\ref{e1})-(\ref{inlet}), (\ref{new odes}), and (\ref{boundary_on_ICS}) has a non-negative solution since the solution corresponds to non-negative physical quantities. In the following, Theorem~\ref{pos} states that the solution is actually non-negative.
By proving the positivity of the solution, it can be verified that the PDE model (\ref{e1})-(\ref{inlet}), and (\ref{new odes}), (\ref{boundary_on_ICS}) is well defined.
\begin{theorem}
\label{pos}
Let (\ref{e1})-(\ref{inlet}), (\ref{new odes}), and (\ref{boundary_on_ICS}) describe the dynamics of $A(t,y,z)$ and $\textbf{u}_i(t)$ for $i=1,2,\ldots ,N$. Then, $\textbf{u}_i(t)\geq 0$ for $t \geq t_i$ and $A(t,y,z) \geq 0$ for $y \in [0,l]$, $z \in [0,h]$, and $t \geq 0$.
\end{theorem}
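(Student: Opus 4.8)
The plan is to establish the two non-negativity assertions simultaneously, since the scalar PDE (\ref{e1}) for $A$ and the vector ODEs (\ref{new odes}) for $\textbf{u}_i$ are coupled through the flux boundary condition (\ref{boundary_on_ICS}) and neither statement is self-contained. Two classical tools do the work: quasi-positivity (forward invariance of the non-negative orthant) for the reaction ODEs, and the parabolic minimum principle together with Hopf's lemma for $A$. I first prove each assertion conditionally on the other, then close the loop.

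\textbf{ODE side.} Fix a biosensor index $i$ and a point $y\in[y_{i,1},y_{i,2}]$, and assume for now that $A(t,y,0)\ge0$ on $[t_i,T]$. Reading off $M$ in (\ref{M}) and the rate laws (\ref{rate eq}), each component of the right-hand side of $d\textbf{u}_i/dt=Mf(\textbf{u}_i,A)$ has the form $(Mf)_j=-u_{i,j}\,\lambda_j+\mu_j$ with $\lambda_j,\mu_j\ge0$ whenever $A\ge0$ and every component of $\textbf{u}_i$ is $\ge0$; for instance $\dot B=-B(f_1A+f_4X)+r_1W+r_4Y$, $\dot W=-W(r_1+f_3C)+f_1AB+r_3Y$, and analogously for $C,D,S,X,Y,Z$, the ``creation'' terms never carrying the factor that is being differentiated. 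This is exactly the quasi-positivity condition, so the non-negative orthant is forward invariant for (\ref{new odes}); equivalently, the componentwise integrating factor $\exp(\int_{t_i}^t\lambda_j)$ (with $\lambda_j$ bounded on compact time intervals by local boundedness of the state) gives $u_{i,j}(t,y)\ge u_{i,j}(t_i,y)\,e^{-\int\lambda_j}\ge0$. Hence $\textbf{u}_i\ge0$ on $[t_i,T]$ \emph{provided} $A$ stays non-negative on biosensor $i$ over that interval.

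\textbf{PDE side.} Conversely, assume every $\textbf{u}_i\ge0$ on $[0,T]$. Since (\ref{e1}) has no zeroth-order term, the weak minimum principle places $\min A$ over $[0,T]\times\bar\Omega$ on the parabolic boundary, where it equals $0$ at $t=0$ by (\ref{initial}) and $A^*\ge0$ on the inlet by (\ref{inlet}); a negative minimum would thus sit at a positive time on one of the remaining boundary pieces. The strong maximum principle excludes the spatial interior (it would force $A$ constant, contradicting (\ref{initial})); on the insulated pieces $z=h$, the gaps on $z=0$, and the outlet $y=l$, Hopf's lemma forces a strictly negative outward normal derivative there, contradicting the homogeneous Neumann conditions in (\ref{inlet}). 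On a biosensor the outward normal is $-\hat z$, so Hopf gives $\partial A/\partial z|_{z=0}>0$ at the touching point $(t_0,y_0,0)$; but (\ref{boundary_on_ICS}) at that point reads $\gamma\,\partial A/\partial z|_{z=0}=(q^T\textbf{u}_i)A(t_0,y_0,0)-p^T\textbf{u}_i\le0$ because $p,q$ have non-negative entries, $\textbf{u}_i\ge0$, and $A(t_0,y_0,0)<0$ --- a contradiction. Hence $A\ge0$ on $[0,T]\times\bar\Omega$ \emph{provided} all $\textbf{u}_i\ge0$ on $[0,T]$.

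\textbf{Closing the loop, and the main obstacle.} The real difficulty is precisely this coupling: neither argument closes by itself. I would handle it by a first-exit-time argument: set $t^\dagger=\sup\{\tau:A\ge0\text{ and }\textbf{u}_i\ge0\ (i=1,\dots,N)\text{ on }[0,\tau]\}$; by continuity both hold on $[0,t^\dagger]$, and if $t^\dagger$ were strictly less than the existence time one reruns the two estimates for the $\e$-perturbed problem (inlet value $A^*+\e$, a strictly quasi-positive $O(\e)$-perturbation of $Mf$), which stays strictly positive on $[0,t^\dagger+\delta]$, and lets $\e\downarrow0$ by continuous dependence --- a contradiction, so $t^\dagger$ equals the existence time (and the conservation laws implicit in $M$ then give global bounds, so the solution is global and the stated half-lines $t\ge t_i$, $t\ge0$ are covered). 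A cleaner, coupling-free alternative is an $L^2$ energy estimate: testing (\ref{e1}) with the negative part $-A_-$ and adding $-\sum_{i,j}\int u_{i,j}^-\,\dot u_{i,j}$, the insulated boundary terms vanish, the inlet term vanishes since $A_-\equiv0$ there, and the biosensor boundary term $\int(q^T\textbf{u}_i A-p^T\textbf{u}_i)A_-$ together with the reaction terms is dominated by $C(t)\big(\|A_-\|_2^2+\sum\|u_{i,j}^-\|_2^2\big)$ once the Laplacian dissipation and the trace inequality absorb the boundary contributions and an a~priori $L^\infty$-bound on $A$ (from the upper maximum principle, or just continuity on a compact set over the existence interval) controls the cross terms; Gronwall from zero initial data then forces $A_-\equiv0$ and $u_{i,j}^-\equiv0$. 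A minor caveat: the maximum-principle route presumes $A$ is a strong or classical solution, so if one only has a weak solution the energy-estimate version is the one to use.
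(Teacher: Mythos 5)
Your proof is correct in its essentials but takes a genuinely different route from the paper. The paper never works at the continuous level: it discretizes the PDE in space by finite differences, so that the concentration values $A_{i,j}$ at the grid nodes together with the species vectors $\textbf{u}_k^i$ form one large finite-dimensional ODE system, and then applies to that \emph{entire} coupled system the same quasi-positivity criterion you invoke only for the reaction ODEs ($v\ge 0$, $v_j=0\Rightarrow F_j(t,v)\ge 0$ implies forward invariance of the non-negative orthant). The verification is a case-by-case check at interior nodes, at each type of boundary node (inlet, outlet, ceiling, gaps, biosensor surface via the discretized Robin-type condition), and for the chemical species; it needs $\Delta y<\gamma/\bar v$ so that the diffusion term dominates the one-sided-differenced advection term when a node value vanishes. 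Positivity of the PDE solution is then claimed by passing to the limit $\Delta y,\Delta z\to 0$. Because the discretized problem is a single ODE system, the paper never has to close a loop between a PDE estimate and an ODE estimate --- the coupling you correctly identify as the main obstacle is absorbed into a single application of the invariance theorem. Your continuous-level argument (parabolic minimum principle plus Hopf's lemma for $A$, quasi-positivity for $\textbf{u}_i$, first-exit-time/$\epsilon$-perturbation or energy estimate to close the loop) is more intrinsic and avoids the paper's unproved discretization-convergence step, but it carries its own technical debts that you only partially acknowledge: Hopf's lemma requires an interior-ball condition that fails at the corners of the rectangular domain (and classical regularity up to the boundary), and the bootstrap requires continuous dependence, hence a well-posedness theory, for the coupled PDE--ODE system. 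Each route trades one unproved technicality for another; both reach the stated conclusion.
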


\begin{proof}
The proof can be found in Appendix~\ref{positivity}.
\end{proof}
\subsection{Illustration of the accuracy of the multi-compartment model}
\label{sim_res_sec}
This section considers a biosensor array comprising four ICS biosensors.  The aim is to show that the multi-compartment model of Sec.\ref{subsecB} yields an excellent approximation to the flow dynamics. The height and width of the flow chamber are $h=0.1$~mm and $w=2$~mm. The length of each biosensor is $L=2$~mm and the diffusion constant is equal to $\gamma=10^{-6}~\mathrm{cm}^2\mathrm{/s}$. 
We study the effect of varying the concentration $A^*$ at the inlet of the flow chamber in the range $10^{-11} \sim  10^{-8}\text{ Mol}/\text{m}^3$ and the velocity in the range $10 \sim 100 \mu \text{L}/\text{min}$. The notation $\text{Mol}/\text{m}^3$, throughout the paper, stands for mole per meter cube.  The spacing between biosensors is $d=1$~mm.

The response of biosensors obtained by the multi-compartment model (\ref{eq7}) is simulated and compared with the response of the PDE model (\ref{e1})-(\ref{inlet}), (\ref{new odes}), (\ref{boundary_on_ICS}). The Comsol multi-physics simulation software is used to solve the PDE via the finite element method. The predefined {\tt convection} and {\tt diffusion} application mode in Comsol is used to define the governing PDE in the domain. The ODEs (\ref{new odes}) on the boundary are defined through the {\tt weak form} boundary setting.
 
Since the measured output of the biosensor is a linear function of the average dimer concentration, define the normalized error between the ODE and PDE responses as
\vspace{-0.5mm}
\begin{equation}\label{norm_error}
e_i(t)=\left | \bar{D}_i(t)-\bar{D}_i^{\text{ODE}}(t) \right | /\bar{D}_i(t),\quad i=1,\ldots,N.
\end{equation}
\vspace{-0.5mm}
In (\ref{norm_error}), $\bar{D}_i(t)$ is the average dimer concentration on biosensor $i$, obtained by the PDE model (\ref{e1})-(\ref{inlet}), (\ref{new odes}), and (\ref{boundary_on_ICS}) and $\bar{D}_i^{\text{ODE}}(t)$ is the corresponding response from  the multi-compartment model (\ref{eq7}).
%
Fig.\ref{res1} shows the normalized error (\ref{norm_error}) versus time for two values of  concentration $A^{*}=10^{-11}$ and $A^{*}=10^{-8} \text{ Mol}/\text{m}^3$. It can be seen that the error during 1000 seconds of simulation time is less than $0.015 \%$ for $A^{*}=10^{-11} \text{ Mol}/ \text{m}^3$ and less than $8 \%$ for $A^{*}=10^{-8} \text{ Mol}/ \text{m}^3$ for all the biosensors.

Fig.\ref{errorvsf} shows the normalized error (\ref{norm_error}) for the first and second biosensor for different flow rates. The concentration of target molecules is $A^{*}=10^{-11}\text{ Mol}/ \text{m}^3$. The figure shows that by increasing the flow rate to $100 \mu \text{L}/\text{min}$, the multi-compartment model (\ref{eq7}) remains accurate within $9\%$ error for the first and second biosensor.
\begin{figure}[t]
 \begin{center}
  \includegraphics[trim=0cm 1cm 0cm 0cm,width=0.9\textwidth]{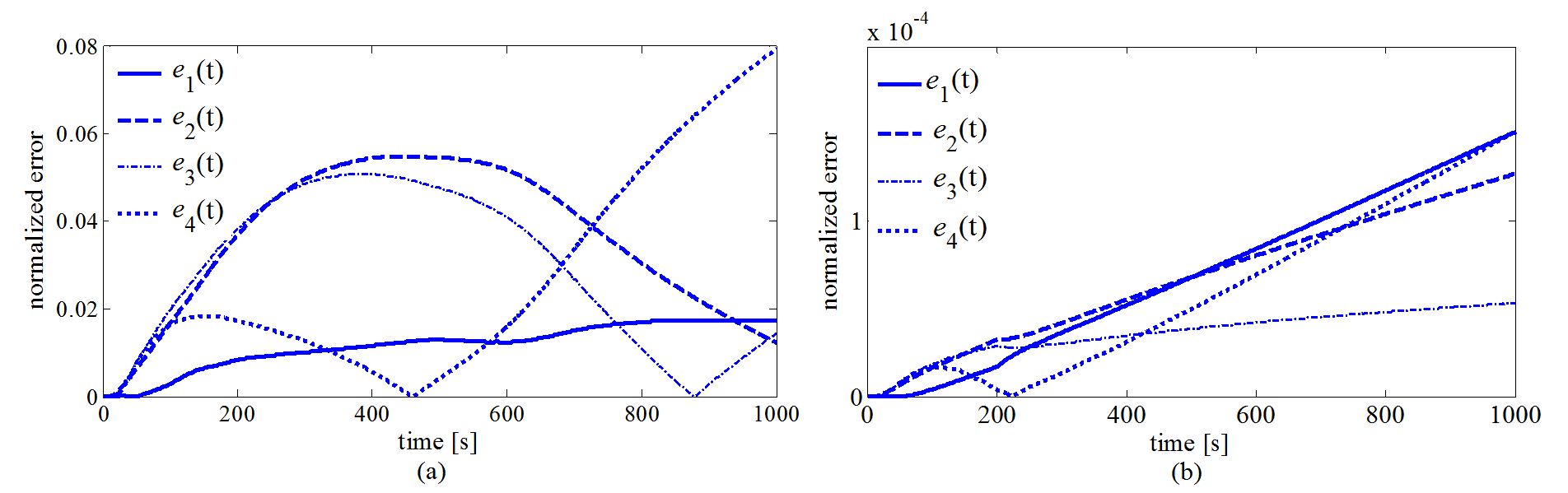}
  \caption{\label{res1}\small{The multi-compartment ODE model (\ref{eq7}) is compared with the PDE model  (\ref{e1})-(\ref{inlet}), (\ref{new odes}), and (\ref{boundary_on_ICS}) by plotting the normalized error described in (\ref{norm_error}) for four biosensors. Plot (a) shows the results for $A^{*}=10^{-8}\text{ Mol}/ \text{m}^3$ and plot (b) corresponds to $A^{*}=10^{-11}\text{ Mol}/ \text{m}^3$. The flow rate is set to $10~\mu\mathrm{L/min}$. The length of the biosensors is $L=2$~mm and their spacing is $d=1$~mm.}}
\end{center}
%
\end{figure}

\begin{figure}[t]

 \begin{center}
  \includegraphics[trim=0cm 2cm 0cm 0cm, width=0.9\textwidth]{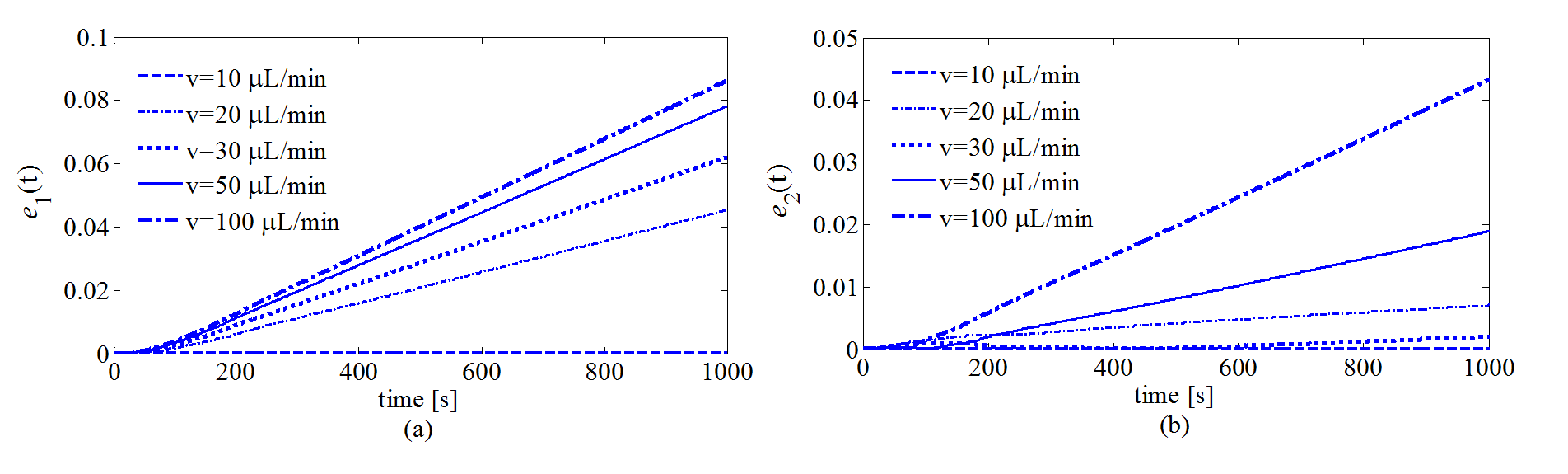}
  \caption{\label{errorvsf}\small{The multi-compartment ODE model (\ref{eq7}) is compared with the PDE model (\ref{e1})-(\ref{inlet}), (\ref{new odes}), (\ref{boundary_on_ICS}) for different flow rates. The normalized error (\ref{norm_error}) is plotted for the first (plot (a)) and second biosensor (plot (b)). All the results correspond to $A^{*}=10^{-11} \text{ Mol}/ \text{m}^3$. The length of biosensors is $L=2$~mm and the spacing is $d=1$~mm.}}
  \vspace{-2mm}
  \end{center}
  \end{figure}

\subsection{Investigating the properties of the estimator}
\label{estimation_res}
This section compares the error variance of the estimate $\hat{A}_1$ (using the approximation (\ref{approx_variance})) with numerical simulations.
Here, the analytical results of Sec.\ref{s2} for the ICS biosensor array is compared with the corresponding simulated results. The achievable improvement in the estimate based on the number of biosensors is also evaluated.

The variance $\sigma_{S,N}^2$ of the finite-sample estimator $\hat{A_1}$ (\ref{eq70}) is estimated by Monte Carlo simulations. The results are then compared with the approximate value (\ref{approx_variance}) in Corollary~\ref{cor} for verification. The standard deviation of $\hat{A_1}$ for different number of biosensors is shown in Table \ref{monte}. The variance is obtained when $S=300$ samples with sampling rate $1 \text{ sample}/\text{s}$  are used for estimation. The actual value of the concentration is $A^*=10^{-8}\text{ Mol}/\text{m}^3$. The signal to noise ratio, defined as the ratio of initial dimer concentration squared to the noise variance, is set to $10$ dB. Table \ref{monte} shows that the estimation variance, for $N=1,2,3$ biosensors, is decreasing slightly less than $1/N$ as the number of biosensors increases. The reason is that $d_i$, defined in (\ref{approx_variance}), is increasing for $i \in \{ 1,2,3\}$ in this case. In order to justify this result, the behaviour of the response $g(A,t)$, defined in (\ref{response}), is investigated. Fig.\ref{sense_fig} illustrates the response of ICS biosensor $g(A,t)$ in (\ref{response}) versus the concentration $A$ of target molecules  in the outer compartment. Recall that the measured output
current of the biosensor is proportional to the average dimer concentration which decreases as the concentration $A$ of target molecules increases. Fig.\ref{sense_fig} shows that the response curve is  sigmoidal and saturates for low and high concentration. As a result, there exists a concentration $A_c(t)$ for each time instant $t$ such that in the range $A > A_c(t)$, the response $g(A,t)$ is convex in $A$. Therefore, the derivative $\partial g (A,t)/\partial A$ is increasing. Since $\partial g (A,t)/\partial A$ is negative, $\left[ \partial g (A,t)/\partial A \right ]^2$ is decreasing in $A$ for $A>A_c(t)$. Define $H(A)$ as $H(A)=\sum_{k=1}^S\left[ \partial g (A,t^k)/\partial A \right ]^2$. Then, there is a concentration $A_c^{*}$ such that in the range $A>A_c^*$, $H(A)$ is decreasing. Assume that the sampling time points $t^{i,k}$ for each biosensor are selected such that the time difference $t^{i,k}-t_i$ in (\ref{approx_variance})  is constant and equal to $t^k$ for $i \in \{1,2, \ldots , N\}$. Then, the sensitivity $d_i$, defined in (\ref{approx_variance}), can be written as $d_i=\alpha^{2i-2} H(\alpha^{i-1}A^*)$. The behaviour of $H$ can explain the ascending behaviour of $d_i$ in Table \ref{monte}. To this end, $H(A)$ and $\alpha^2 H(\alpha A)$ are illustrated versus $A$ in Fig.\ref{ajaba}. From the experimental values of the parameters, $\alpha$ in (\ref{Ai}) is obtained as $\alpha=0.8173$. The number of time samples in acquiring $H(A)$ is $S=300$ and the sampling rate is $1\text{ sample}/\text{s}$. Fig.\ref{ajaba} shows that there is a range $[A^{\text{m}},A^{\text{n}}]$ such that
\begin{align}\label{H}
\alpha^2 H(\alpha A)> H(A), \quad A \in [A^{\text{m}},A^{\text{n}}], \quad H(A)=\sum_{k=1}^S\left[ \partial g (A,t^k)/\partial A \right ]^2 .
\end{align}
 Consequently, $d_i$ in (\ref{approx_variance}) is increasing as long as $\alpha^{i-1}A^*$ belongs to $[A^{\text{m}},A^{\text{n}}]$ for $i \in {1,2,...,N-1}$. Assuming that $A^* \in [A^{\text{m}},A^{\text{n}}]$, there is a positive integer $N^*$ such that for $N \leq N^*$, $d_i$ in (\ref{approx_variance})    is increasing for $i=1,2, \ldots , N$. Hence, using $N \leq N^*$ biosensors decreases the estimation variance to less than $1/N$ of the variance obtained with a single biosensor. The expression for $N^*$ can be obtained as
\vspace{-4mm} 
 \begin{align}\label{Nstar}
 N^*=\lfloor{ \frac{\log{\frac{A^{\text{m}}}{A^*}}}{\log{\alpha}} }\rfloor +2, 
 \end{align}
 where $A^{\text{m}}$ is defined in (\ref{H}), $A^*$ is the concentration of target molecules at the inlet of the flow chamber, and $\alpha$ is defined in (\ref{Ai}). In (\ref{Nstar}), $\lfloor \cdot \rfloor$ denotes the floor function which maps a real number to the largest integer which is not larger than that real number. According to the value of $A^{\text{m}}$ in Fig.\ref{ajaba}, the value of $N^*$ is obtained as $N^*=4$ for $A^*=10^{-8}\text{Mol}/\text{m}^3$.
%
%
%

It can be seen from Fig.\ref{sense_fig} and Fig.\ref{ajaba} that increasing the number of biosensors beyond a certain number does not decrease the estimation variance significantly  since the derivative $\partial g (A,t)/ \partial A$ approaches zero as target molecules in the fluid are grabbed by the previous biosensors in the array. 
\begin{table}
\caption{\label{monte}\small{Comparison between the simulated and approximate value (\ref{approx_variance}), in Corollary~\ref{cor}, for the variance $\sigma_{S,N}^2$ of $\hat{A_1}$ (\ref{eq70}): The simulated and analytical values of the standard deviation $\sigma_{S,N}/A^*$ for $S=300$ time samples are shown. The sampling rate is $1 \text{ sample}/\text{s}$.
$A^*=10^{-8}\text{ Mol}/\text{m}^3$. The signal to noise ratio, defined as the ratio of initial dimer concentration squared to the noise variance, is equal to 10 dB.}}
\begin{center}
\begin{tabular} {c|c|c|}
\cline{2-3}
&\multicolumn{2}{|c|}{$\sigma_{S,N}/A^*$}\\
\cline{2-3}
&  Simulated & Analysis (\ref{approx_variance}) \\
\hline
\multicolumn{1}{|c|}{N=1}& 0.0971 & 0.0943 \\
\hline
\multicolumn{1}{|c|}{N=2} & 0.0611 & 0.0651 \\
\hline
\multicolumn{1}{|c|}{N=3} & 0.044 & 0.051\\
\hline
\end{tabular}

\end{center}
\vspace{-2mm}
\end{table}

\begin{figure}[t]
\begin{center}
\includegraphics[width=0.6 \textwidth]{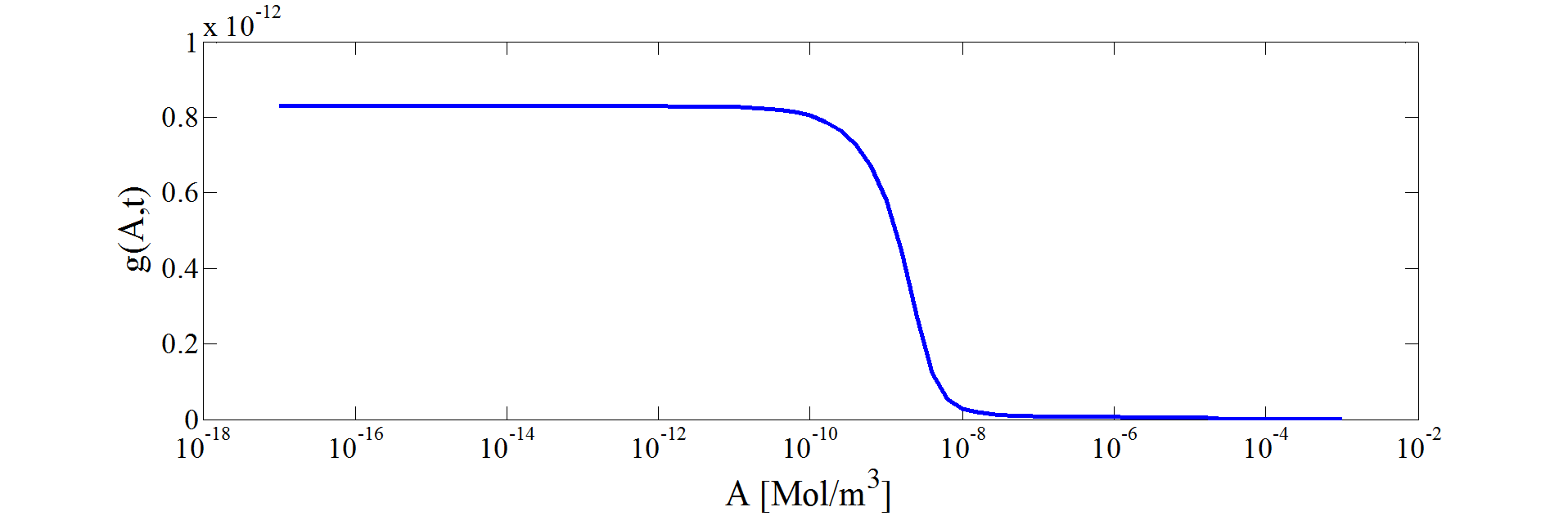}
\caption{\label{sense_fig}\small{The response $g(A,t)$, defined in (\ref{response}), is insensitive to low concentrations of target molecules. By increasing the concentration of target molecules, the sensitivity of $g(A,t)$ (the magnitude of the derivative $\partial g(A,t) / \partial A$) is initially increasing and then decreasing. The response is plotted for a time horizon of $500$s.}}
\vspace{-4mm}
\end{center}
\end{figure} 

\begin{figure}[t]
\begin{center}
\includegraphics[width=1 \textwidth]{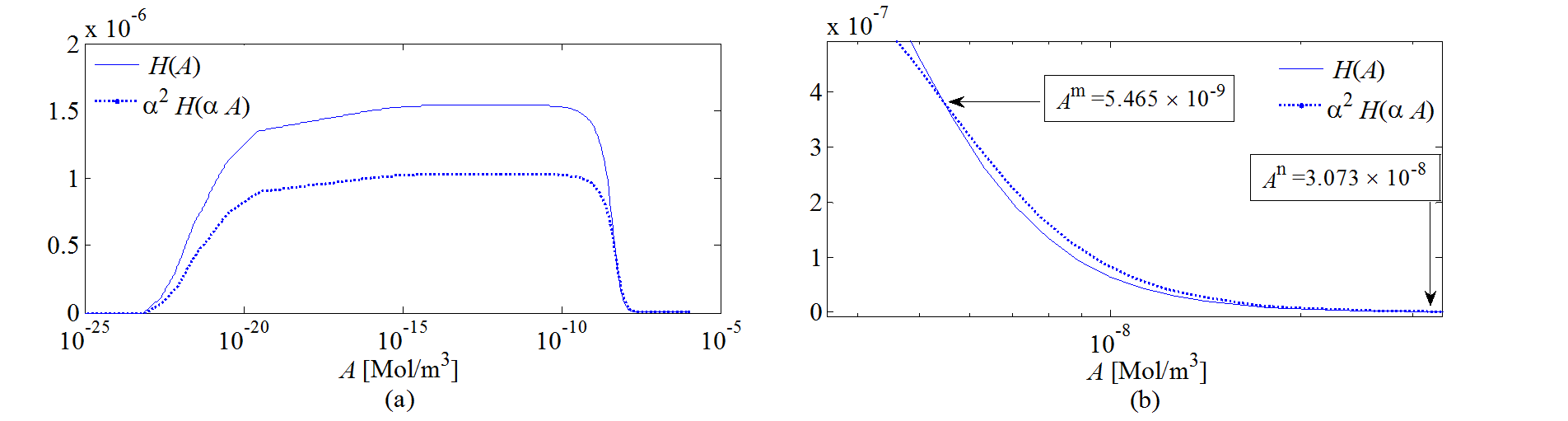}
\caption{\label{ajaba} \small{The ascending behaviour of $d_i$ in (\ref{approx_variance}) is investigated by comparing $H(A)$ and $\alpha^2H(\alpha A)$ where $H(A)$ is defined in (\ref{H}). The part of plot (a) where $\alpha^2H(\alpha A)> H(A)$ is magnified in plot (b). The interval $[A^{\text{m}},A^{\text{n}}]$ in (\ref{H}) is specified for this example. The value of $N^*$ in (\ref{Nstar}) is obtained as $N^*=4$ for the initial concentration $A^*=10^{-8}\text{Mol}/\text{m}^3$.}}
\end{center}
\end{figure}
The asymptotic behaviour of the estimate $\hat{A_1}$ (\ref{eq70}), using an array of $N$ biosensors, for $N=1,2,3$,  is illustrated in Table~\ref{tablejoon}. The asymptotic variance of $\sqrt{S} \hat{A_1}$, obtained as $\sigma^2 \Gamma^{-1}$ in (\ref{gamma}), is computed for $N=1,2,3$. Then the simulated value of $S \sigma^2_{S,N}$, where $\sigma^2_{S,N}$ is defined in Corollary~\ref{cor}, is compared with the corresponding asymptotic variance. The variance $\sigma^2_{S,N}$ is obtained by Monte Carlo simulations. The initial concentration is $A^*=10^{-8}\text{ Mol}/\text{m}^3$.
 In summary, it can be seen that:
\begin{remunerate}
\item The multi-compartment model (\ref{eq7})  predicts the response of the ICS biosensor array accurately over the range of concentrations between $10^{-11}\text{ Mol}/\text{m}^3$ and $10^{-8}\text{ Mol}/\text{m}^3$.
\item The initial concentration $A^*$ at the inlet of a flow chamber can be estimated with the ICS biosensor array using the multi-compartment model. Table~\ref{monte} shows that by using an array of three sensors and $300$ measurement samples during the first $300$s of the response, less than $5\%$ estimation error can be achieved.
\item When the initial concentration $A^*$ is in a certain range and the number of biosensors $N$ does not exceed a certain limit, the estimation variance is less than $1/N$ of the variance obtained by a single biosensor. This shows that there can be significant advantages in using an array of biosensors compared to a single biosensor.
\end{remunerate}
\begin{table}[t]
\caption{\small{Asymptotic behaviour of $\hat{A_1}$ defined in (\ref{eq70}): The finite-sample estimation variance $\sigma^2_{N,S}$ of $\hat{A_1}$ is obtained by Monte Carlo simulations. The concentration at the inlet of the flow chamber is $A^*=10^{-8}/\text{m}^3$. From (\ref{gamma}), $\sigma^2 \Gamma^{-1}$ is computed.}}
\begin{center}
\begin{tabular}{c|c|c|c|c|}
\cline{2-5}
 & \multicolumn{3}{|c|}{$S\sigma^2_{S,N}$} & \multicolumn{1}{|c|}  {\multirow{2}{*}{$\sigma^2 \Gamma^{-1}$}}  \\
 \cline{2-4}
  & $S=300$ & $S=1000$ & $S=5000$\\
  \cline{1-5}
  \multicolumn{1}{|c|}{$N=1$} & $2.8273\times 10^{-16}$&$7.4021 \times 10^{-16}$ & $2.3283 \times 10^{-15}$&$2.491 \times 10^{-12}$\\
  \cline{1-5}
  \multicolumn{1}{|c|}{$N=2$} & $1.1199 \times 10^{-16}$ & $3.2628 \times 10^{-16}$ & $1.5153 \times 10^{-15}$& $1.139 \times 10^{-12}$\\
  \cline{1-5}
  \multicolumn{1}{|c|}{$N=3$} & $5.808 \times 10^{-17}$ & $1.5981 \times 10^{-16}$ & $7.2200 \times 10^{-16}$ & $0.692 \times 10^{-12}$ \\
  \cline{1-5}
\end{tabular}
\end{center}

\label{tablejoon}

\end{table}

\appendix
%
\section{Proof of Theorem~\ref{multiple-comp}}
\label{multi_comp}
Divide the flow chamber into two segments along the $z$ direction at $z=h_0$ where $h_0$ is selected as (\ref{hh0}). Then discretize the $y$-axis at the edges of the biosensors located at $y=y_{i,1}$ and $y=y_{i,2}$, as shown in Fig.~\ref{fig1}. The inner compartment of biosensor $i$ is located at $y \in(y_{i,1},y_{i,2})$ and $z \in (0,h_0)$. The outer compartments above biosensor $i-1$ and $i$ are separated by a middle compartment which is located in the range $y \in (y_{i-1,2},y_{i,1}) $ and $z \in (h_0,h)$.

Non-dimensionalizing (\ref{e1}) yields
\begin{align}\label{nondim}
     \epsilon^2 &\frac{\partial a}{\partial \tau} = \epsilon^2 \frac{{\partial} ^2 a}{\partial Z^2} +  \frac{{\partial} ^2 a}{\partial Y^2} -  \frac{\epsilon Z(1-Z)}{p} \frac{\partial a}{\partial Y}, \quad
      \epsilon ={L}/{h}, \quad p=\frac{\gamma}{4h\bar{v}} \\ \nonumber
     a &=A/A^{*}, \quad \tau={\gamma}/{h^2}t  , \quad Z=z/h, \quad Y=y/L.
     \end{align}
     The P\'{e}clet number \cite{numerical} in this problem is 
\vspace{-2mm}
\begin{align}\label{pec}
P_e(Z)={\epsilon Z(1-Z)}{p}^{-1},
\end{align}
which is used as a criterion to identify the dominance of advection or diffusion in the transport of particles in the flow. 
Obtain the PDE (\ref{nondim}) at $Z=\epsilon_0/h$ as
\begin{align}\label{nondim2}
     \epsilon^2 & \left . \frac{\partial a}{\partial \tau}\right |_{Z=\epsilon_0/h} = \epsilon^2  \left . \frac{{\partial} ^2 a}{\partial Z^2}\right |_{Z=\epsilon_0/h} +  \left . \frac{{\partial} ^2 a}{\partial Y^2} \right |_{Z=\epsilon_0/h} -  \frac{\epsilon \frac{\epsilon_0}{h}\left(1-\frac{\epsilon_0}{h} \right)}{p} \left . \frac{\partial a}{\partial Y} \right |_{Z=\epsilon_0/h}, 
     \end{align}
Assume that $\epsilon_0=O(\gamma^2)$. Therefore, the P\'{e}clet number (\ref{pec})  at $Z=\epsilon_0/h$ is  $P_e(\frac{\epsilon_0}{h})=O(\gamma)$ and the advection term on the right hand side of (\ref{nondim2}) is negligible comparing to the diffusion term. Thus, (\ref{nondim2}) at $Z=\epsilon_0/h$ is rewritten as
\begin{align}\label{Astar}
     \epsilon^2 & \left . \frac{\partial a}{\partial \tau}\right |_{Z=\epsilon_0/h} = \epsilon^2  \left . \frac{{\partial} ^2 a}{\partial Z^2}\right |_{Z=\epsilon_0/h} +  \left . \frac{{\partial} ^2 a}{\partial Y^2} \right |_{Z=\epsilon_0/h} +  O(\gamma), 
     \end{align}
 Hence, the dimensional PDE (\ref{e1}) at $z=\epsilon_0$ can be expressed as
\begin{align}\label{22}
\vspace{-2mm}\left . \frac{\partial A}{\partial t} \right |_{z=\epsilon_0}=\gamma  \left . \frac{{\partial} ^2 A}{\partial z^2}\right |_{z=\epsilon_0} + \gamma \left . \frac{{\partial} ^2 A}{\partial y^2} \right |_{z=\epsilon_0} +  O(\gamma^2),
  \end{align}
  
  From the selection of $h_0$ in (\ref{hh0}), the magnitude of $h_0$ is of order $O(\gamma^{1/3})$. Assume that the value of $\epsilon_0$ is selected such that $0<\epsilon_0<h_0$. The second order derivative in $z$ direction in (\ref{22}) can be expressed as
  \begin{align}\label{sub1}
  \left . \frac{{\partial} ^2 A}{\partial z^2}\right |_{z=\epsilon_0}=\frac{1}{h_0}\left[ \left . \frac{{\partial} A}{\partial z}\right |_{z=h_0} - \left . \frac{{\partial} A}{\partial z}\right |_{z=0} \right]+ O(h_0)
  \end{align}
  The derivative $\left . \frac{\partial A}{\partial z} \right |_{z=h_0}$  in (\ref{sub1}) can be expressed as
\begin{align} \label{t2}
\left . \frac{\partial A}{\partial z} \right |_{z=h_0} = \frac{1}{{h_0}}\left[ A(t,y,h_0+\epsilon_0)-A(t,y,\epsilon_0)\right] + O(h_0).
\end{align}
Using the boundary condition (\ref{R}), the derivative $\left . \frac{\partial A}{\partial z} \right |_{z=0}$  in (\ref{sub1}) on sensor $i$ is
\begin{equation}\label{t1}
\left .\frac{\partial A}{\partial z}\right |_{z=0} = \frac{1}{\gamma} R \left( A(t,y,\epsilon_0)+O(\epsilon_0),\textbf{u}_i \right),\quad y \in(y_{i,1},y_{i,2}).
\end{equation}
Substituting (\ref{t2}) and (\ref{t1}) in (\ref{sub1}) yields
\begin{align}\label{sub2}
  \left . \frac{{\partial} ^2 A}{\partial z^2}\right |_{z=\epsilon_0}&=\frac{1}{h_0}\left[ \frac{1}{{h_0}}\left[ A(t,y,h_0+\epsilon_0)-A(t,y,\epsilon_0)\right] + O(h_0)\right] \\ \nonumber
  &- \frac{1}{h_0\gamma} R \left( A(t,y,\epsilon_0)+O(\epsilon_0),\textbf{u}_i \right) + O(h_0).~y\in(y_{i,1},y_{i,2})
  \end{align}  
 From (\ref{sub2}), the PDE (\ref{22}) for $y\in(y_{i,1},y_{i,2})$ can be expressed as
 \begin{align}\label{eehaa}
 h_0 \left . \frac{\partial A}{\partial t} \right |_{z=\epsilon_0}&=\frac{\gamma}{h_0}  \left[ A(t,y,h_0+\epsilon_0)-A(t,y,\epsilon_0)\right]-R \left( A(t,y,\epsilon_0),\textbf{u}_i \right)+O(\epsilon_0)\\ \nonumber
 &+\gamma O(h_0)+h_0 \gamma \left . \frac{{\partial} ^2 A}{\partial y^2} \right |_{z=\epsilon_0} + \gamma O(h_0^2)+h_0 O(\gamma^2), \quad y \in (y_{i,1},y_{i,2})
 \end{align}
 The exponent of the concentration $A$ in the polynomial $R(A,\textbf{u})$ can be equal or greater than one. Thus, the maximum order of approximation error in $R(A,\textbf{u})$ for an approximate $A$ occurs when the exponent of $A$ is one. This case is considered in (\ref{eehaa}) and $R \left( A(t,y,\epsilon_0)+O(\epsilon_0),\textbf{u}_i \right)$ is replaced with $R \left( A(t,y,\epsilon_0),\textbf{u}_i \right)+O(\epsilon_0)$. The value of $\epsilon_0$ is selected as $\epsilon_0=O(\gamma^{2})$.
From (\ref{hh0}),  $h_0=O(\gamma^{1/3})$. Thus, we have 
$\frac{\gamma}{h_0}=O(\gamma^{2/3})$, $\gamma h_0=O(\gamma^{4/3})$, $\gamma h_0^2=O(\gamma^{5/3})$, and $h_0\gamma^2=O(\gamma^{7/3})$.
Thus, the sum of the last five terms on the right hand side of (\ref{eehaa}) is of order $O(\gamma^{4/3})$. The first term on the right hand side of (\ref{eehaa}) is of order $O(\gamma^{2/3})$ and the left hand side of (\ref{eehaa}) is of order $O(\gamma^{1/3})$. Equation (\ref{eehaa}) for $y \in (y_{i,1},y_{i,2})$ can be rewritten as
\begin{align}\label{eehaa2}
\vspace{-1mm}
 \hspace{-2mm} h_0 \left . \frac{\partial A}{\partial t} \right |_{z=\epsilon_0}=\frac{\gamma}{h_0}  \left[ A(t,y,h_0+\epsilon_0)-A(t,y,\epsilon_0)\right]-R \left( A(t,y,\epsilon_0),\textbf{u}_i \right)+O(\gamma^{4/3}),
 \vspace{-1mm}
   \end{align}
 where $h_0$ is defined in (\ref{hh0}). 
 
 The rest of the proof focuses on obtaining an expression for $A(t,y,h_0+\epsilon_0)$ in (\ref{eehaa2}) for $y \in \left (y_{i,1}, y_{i,2}\right )$. Consider (\ref{nondim}) for $Z \in(h_0/h,1-h_0/h)$. It can be seen from the selection of $h_0$ in (\ref{hh0}) that the P\'{e}clet number (\ref{pec})  has a large value for a small $\gamma$ and can be written as $P_e(Z)=1/\lambda$ where $\lambda$ at its maximum is of  order $O(\gamma^{2/3})$ at $Z=h_0/h$. The PDE (\ref{nondim}) for $Z \in(h_0/h,1-h_0/h)$ can be expressed as
\begin{align}\label{hey}
\lambda \epsilon^2  \frac{\partial a}{\partial \tau}=\lambda \epsilon^2  \frac{\partial a^2}{\partial Z^2}+\lambda  \frac{\partial a^2}{\partial Y^2} -  \frac{\partial a}{\partial Y},~\lambda=P_e^{-1}(Z),~Z \in(h_0/h,1-h_0/h).
\end{align}
Consider (\ref{hey}) for $t \in(t_{i-1},t_{i})$ where $t_i$ is defined in (\ref{ti}).
 For $t \in (t_{i-1},t_{i})$, the part of the flow chamber, in the range $y \in (y_{i-1,2},y_{i,2})$, is partially occupied with the fluid. Therefore, $\partial a/\partial Y$ in (\ref{hey}) has a significant value such that ${\lambda}^{-1}\frac{\partial a}{\partial Y}$ is of order $O(\lambda^{-1})$ where $\lambda$, in (\ref{hey}), has a small value. Thus, comparing (\ref{Astar}) and (\ref{hey}) concludes  that for $t \in (t_{i-1},t_i)$ the variations of $A(t,y,z)$, for $z \in (h_0,h-h_0)$ and $y \in (y_{i-1,2},y_{i,2})$, is fast whereas $A(t,y,\epsilon_0)$ is slowly varying for $y \in (0,l)$. Therefore, assumption (2) can be justified. 
 By ignoring the diffusion along $y$ and $z$ direction versus the advection term in (\ref{hey}), the dimensionalized form of (\ref{hey}) for $t \in (t_{i-1},t_i)$, $y \in (y_{i-1,2},y_{i,2})$, and $z \in (h_0,h-h_0)$ can be written as
  \begin{align}\label{hey3}
\frac{\partial A}{\partial t}= -v(z)  \frac{\partial A}{\partial y}+O(\gamma),~z 
\in(h_0,h-h_0),~y \in (y_{i-1,2},y_{i,2}),~t \in (t_{i-1},t_i).
\end{align}
The above PDE solution has the following initial condition.
\begin{align}\label{initi}
A(t_{i-1},y,z)=0,~y \in (y_{i-1,2},y_{i,2}),~z \in (h_0,h-h_0).
\end{align}
In order to find the solution of (\ref{hey3}), a boundary condition is required. By applying the divergence  theorem in the outer compartment above biosensor $i-1$, a boundary condition is obtained at $y_{i-1,2}$.

{\bf Determining a boundary condition for (\ref{hey3}) at $y=y_{i-1,2}$:}
Consider the general form of the advection-diffusion equation in the flow chamber (\ref{dimensions}) as
\begin{equation}\label{new_advection}
 \frac{\partial A}{\partial t}=\nabla \cdot \left( \gamma \vec{\nabla}  A - A \textbf{v} \right ),\quad y \in(0,l),\quad z \in (0,h)
 \end{equation}
where $\textbf{v}$ is the velocity vector of the fluid, $\nabla \cdot (\cdot)$ represents the divergence operator, and  $\vec{\nabla}$ denotes gradient. 
Applying the divergence theorem \cite{folland} to (\ref{new_advection}) results in
 \begin{equation}\label{div thrm}
 \int _{\Omega} {\frac{\partial A}{\partial t}\,dy\,dz} =\int _ {\partial \Omega} {\left(\left( \gamma \vec{\nabla}  A - A \textbf{v} \right )\cdot \textbf{n} \right)d{l(y,z)}}
 \end{equation}
 where the left hand side is a surface integral over the bounded domain $\Omega$. The right side of (\ref{div thrm}) is a line integral over the boundary of $\Omega$. The vector $\textbf{n}$ is the outward pointing unit normal field of the boundary $\partial \Omega$. Here, the domain $\Omega$ is selected to be the rectangular region defined as $\Omega=\{(y,z)|z \in (h_0,h), y \in (y_{i-1,1},y_{i-1,2})\}$. The domain $\Omega$ comprises the outer compartment of biosensor $i-1$. Assuming that the multi-compartment model (\ref{eq7}) holds for biosensor $i-1$, the maximum time derivative of the concentration in the outer compartment above biosensor $i-1$ is 
 $O(\gamma)$. Thus, the left hand side of (\ref{div thrm}) is of order $O(\gamma)$. The line integral, on the right hand side, over the upper edge of $\Omega$ at $z=h$ is zero due to insulation  boundary condition in (\ref{inlet}). The middle compartment between sensors $i-1$ and $i-2$ is also in equilibrium with concentration $A_{i-1}+O(\gamma)$. Hence, the line integral over the left edge of $\Omega$ in (\ref{div thrm}), at $y=y_{i-1,1}$, is obtained as
\begin{align}\label{gamb0}
\hspace{-4mm} \int_{h_0}^{h} \left( \gamma \vec{\nabla}  A -A \textbf{v} \right )\cdot(-{\hat{y}} dz)=A_{i-1}\int_{h_0}^h v(z)\,dz + O(\gamma).
\end{align} 
Denoting the concentration in the inner compartment of biosensor $i-1$ by ${a}_{i-1}(t,y)$, the derivative $\left. \frac{\partial A}{\partial z} \right|_{z=h_0}$ can be expressed as $\frac{A_{i-1}-{a}_{i-1}(t,y)}{h_0}+O(h_0)$. Using this expression, the part of the line integral in (\ref{div thrm}), over the bottom edge of $\Omega$ at $z=h_0$, can be written as
\begin{align}\label{gamb2}
\int_{y_{i-1,1}}^{y_{i-1,2}} \left( \gamma \vec{\nabla}  A - A \textbf{v} \right )\cdot {(-\hat{z}dy)} =\frac{\gamma L}{h_0} \left( \bar{a}_{i-1}(t)-A_{i-1}\right) +O(\gamma^{4/3}).
\end{align}
Here, $\bar{a}_{i-1}(t)$ is the average of ${a}_{i-1}(t,y)$ over the length of the biosensor.
By substituting (\ref{gamb0}) and (\ref{gamb2}) in (\ref{div thrm}), the line integral  over the right edge of $\Omega$ at $y=y_{i-1,2}$ is evaluated as
\begin{align}\label{for_eybaba}
\int _ {h_0}^{h} {\left( \gamma \frac{\partial A}{\partial y} - A v(z) \right )\,dz }=-f(t),~y=y_{i-1,2}.
\end{align}
where
\vspace{-4mm}
\begin{align}\label{gt}
f(t)=A_{i-1}(h-h_0)v_1+\frac{\gamma L}{h_0} \left( \bar{a}_{i-1}(t)-A_{i-1}\right)+O(\gamma),\quad v_1=\frac{\int_{h_0}^h v(z)}{h-h0}.
\end{align}
Since the concentration everywhere in $\Omega$ is equal to $A_{i-1}+O(\gamma)$, the derivative in $z$ direction along the left edge of $\Omega$ is of order $O(\gamma)$. Therefore, (\ref{for_eybaba}) results in \begin{align}\label{for_eybaba2}
 \gamma \frac{\partial A}{\partial y} - A v_1  =-\frac{f(t)}{h-h_0},\quad y=y_{i-1,2}, \quad z \in (h_0,h).
\end{align}
where $f(t)$ and $v_1$ are defined in (\ref{gt}). Note that the term $\gamma \frac{\partial A}{ \partial y}$ cannot be considered to be of order $O(\gamma)$ since $\frac{\partial A}{ \partial y}$ at $t=t_{i-1}$ and $y=y_{i-1,2}$ can be of order $O(\gamma^{-1})$ or larger. Equation (\ref{for_eybaba2}) can be considered as a boundary condition for the transport equation (\ref{hey3}).

The solution of the transport equation (\ref{hey3}) with the initial condition (\ref{initi}) and the boundary condition (\ref{for_eybaba2}) for $t \in [t_{i-1},t_i)$ can be found  as \cite{Evans}
\begin{align}\label{sol0}
\hspace{-4mm} A(t,y,z)= h(t-t_{i-1},y-y_{i-1,2},z) + O(\gamma),~ y \in [y_{i-1,2}, y_{i,2}),~z \in(h_0,h-h_0),
\end{align}
where,
\vspace{-4mm}
\begin{align}\nonumber
h(t,y,z)=\left \{ 
\begin{array}{lr}
\frac{v(z)}{\gamma (h-h_0)}\exp{\left[\frac{v_1}{\gamma}\left(y-v(z)t\right)\right]} \int_{0}^{t-\frac{y}{v(z)}} \exp{\left(\frac{v_1v(z)}{\gamma}\tau\right)} f(\tau + t_{i-1} )\,d \tau  \\
+k \exp{\left[\frac{v_1}{\gamma}\left(y-v(z)t\right)\right]},~~~~~~~~~~~~~~y\leq v(z)t \\
0,~~~~~~~~~~~~~~~~~~~~~~~~~~~~~~~~~~~~~\, y>v(z)t \geq 0
\end{array}
\right ..  
\end{align}
Using (\ref{gt}) and integration by parts obtains $h(t,y,z)$ in (\ref{sol0}), for $y \leq tv(z)$, as
\begin{align}\label{h_simple}
& h(t,y,z)=\alpha  A_{i-1} + \frac{\gamma L}{h_0 (h-h_0)v_1}\bar{a}_{i-1}\left(t+t_{i-1}-\frac{y}{v(z)}\right) + \exp{\left[\frac{v_1}{\gamma}\left(y-v(z)t\right)\right]} \\ \nonumber
& \times \left[k-\frac{f(t_{i-1})}{(h-h_0)v_1}-\frac{1}{v_1(h-h_0)} \int_{0}^{t-\frac{y}{v(z)}} {\exp{\left(\frac{v_1 v(z)}{\gamma} \tau \right )} \frac{d f(\tau + t_{i-1})}{d \tau}}\, d \tau\right] +O(\gamma),
\end{align}
where $\alpha$ is defined in (\ref{Ai}). Consider (\ref{h_simple}) for $y \leq tv(z)-\beta \gamma^{1/3}$ where $\beta$ is a positive constant. In (\ref{h_simple}), $\frac{d f(\tau )}{d \tau}$ can be replaced with $\frac{\gamma L}{h_0}\frac{d \bar{a}_{i-1}(\tau)}{d \tau}$ according to (\ref{gt}). According to assumption (1), $\frac{d \bar{a}_{i-1}(t)}{d t} \geq 0$ for $t_{i-1} \leq t <t_{i}$. Thus, the maximum derivative $\frac{d \bar{a}_{i-1}(t)}{d t}$ is obtained as $\frac{\gamma}{h_0^2}A_{i-1}$ from (\ref{eq7}) since $R(A,\textbf{u})\geq0$. Considering assumption (3) and $y \leq tv(z)-\beta \gamma^{1/3}$, the maximum magnitude of the third term on the right hand side of (\ref{h_simple}) is of order $O(\gamma)$. From the above explanation, the maximum change in $\bar{a}_{i-1}$, for $t \in [t_{i-1},t_i)$, is $\frac{\gamma}{h_0^2}A_{i-1}(t_i-t_{i-1})$. Therefore, according to assumption (3), the second term on the right hand side of (\ref{h_simple}) is also of order $O(\gamma)$.
Consequently, it can be expressed that
\begin{align}\nonumber
h(t,y,z)=\alpha A_{i-1} + O(\gamma), \quad y \leq t v(z)-\beta \gamma^{1/3}.
\end{align}
Considering (\ref{sol0}) and the above equation, the concentration $A(t,y,z)$ inside the outer compartment of biosensor $i$ and the previous middle compartment  at $t=t_i$ (when the flow reaches the far end of biosensor $i$) can be obtained as
\begin{align}\label{chaagh3}
A(t_i,y,z)=\alpha A_{i-1} + O(\gamma), \quad y \in [y_{i-1,2}, y_{i,2}),~ z \in(h_0,h-h_0),
\vspace{-3mm}
\end{align}
where  $\alpha$ is defined in (\ref{Ai}).
It is shown at the end of this section that for $t>t_i$ the variations of derivative $\frac{\partial A}{\partial t}$ for $y \in (y_{i-1,2},y_{i,2})$ and $z \in (h_0,h-h_0)$ is of order $O(\gamma)$. Therefore, there exists a time $t^*$ such that for $t \in (t_i,t^*)$, (\ref{Astar}) holds.
The complete proof of (\ref{Astar}) can be found at the end of this section.
The value of $A(t,y,h_0+\epsilon_0)$ in (\ref{eehaa2}), for $t \in (t_i,t^*)$, can be obtained as $A(t,y,h_0+\epsilon_0)=A_i+O(\gamma)$ according to (\ref{Astar}) and (\ref{chaagh3}). Hence, $A(t,y,\epsilon_0)$ in (\ref{eehaa2}) for $t\in (t_i ,t^*)$ is governed by
\begin{align}\label{eehaa3}
h_0 \left . \frac{\partial A}{\partial t} \right |_{z=\epsilon_0}&=\frac{\gamma}{h_0}  \left[ A_i-A(t,y,\epsilon_0)\right]-R \left( A(t,y,\epsilon_0),\textbf{u}_i (t,y)\right)+O(\gamma^{4/3}),\\ \nonumber
  y & \in (y_{i,1},y_{i,2}),\quad t \in (t_i,t^*), \quad A(t_i,y,\epsilon_0)=0.
\end{align}
where $A_i$ is obtained in (\ref{Ai}). 
Considering assumption (2) and  $\epsilon_0=O(\gamma^2)$, (\ref{u_n}) can be expressed as
\begin{align}\nonumber
\frac{d\textbf{u}_i(t,y)}{dt}=G(\textbf{u}_i(t,y),{A(t,y,\epsilon_0)}+O(\gamma^2)),~   t \in (t_i, t^*),~y \in \left[ y_{i,1}, y_{i,2}\right], ~ {\textbf{u}}_i \left ( t_i ,y\right )=u_0.
\vspace{-2mm}
\end{align}
For an arbitrary value $y_1$ for $y$, (\ref{eehaa3}) together with the above equation will represent an ODE system whose solution includes expressions in terms of $t$ and independent of $y_1$ for $A(t,y_1,\epsilon_0)$ and $\textbf{u}_i(t,y_1)$. In fact, $A(t,y,\epsilon_0)$ and $\textbf{u}_i(t,y)$ at every point $y$ has the same dynamics in time. In order to derive (\ref{eq7}), $A(t,y,\epsilon_0)$ and $\textbf{u}_i(t,y)$ in (\ref{eehaa3}) and the above equation are replaced with their average values in the range $(y_{i,1},y_{i,2})$. The spatial average of $A(t,y,\epsilon_0)$ in $(y_{i,1},y_{i,2})$ is denoted by $\bar{a}_i(t)$.

To complete the proof, it needs to be shown that the multi-compartment model (\ref{eq7}) holds for the first biosensor which is straightforward. The  concentration in the outer compartment of the first biosensor achieves equilibrium fast. For $t \geq t_1$, it is equal to the concentration $A^*$ at the inlet of the flow chamber.

{\bf Proof of (\ref{Astar}):}  From assumption (1), $A(t,y,z)$ is an increasing concave function in $t$ for $t>t_i$ and $y \in (y_{i-1,2},y_{i,2})$. Thus $\frac{\partial A}{\partial t}\geq 0$ is decreasing in $t$ for $t>t_i$ and $y \in (y_{i-1,2},y_{i,2})$. By showing that
\begin{align}\label{bahbah}
\frac{\partial A(t_i,y,z)}{\partial t}=O(\gamma),~y \in (y_{i-1,2},y_{i,2}- \gamma^{1/3}],~ z \in(h_0,h-h_0),
\end{align}
it can be concluded that $\frac{\partial A}{\partial t}=O(\gamma)$ for all $t \geq t_i$ and thus (\ref{Astar}) is obtained.
For the derivation of (\ref{bahbah}), the value of $\frac{\partial}{\partial y} A(t,y,z)$ for $y \in (y_{i-1,2},y_{i,2})$ at $t=t_i$ can be obtained from (\ref{sol0}), (\ref{h_simple}), and (\ref{gt}) as
\begin{align}\label{khosham}
&v(z) \frac{\partial A(t_i,y,z)}{\partial y}=  -\frac{\gamma L}{h_0 v_1 (h-h_0)} \frac{d \bar{a}_{i-1}}{dt}(t_i - \frac{\Delta y}{v(z)}) \\ \nonumber
&+ \exp \left ({\frac{v_1}{\gamma} (\Delta y -v(z) \Delta t)} \right ) \left [ \frac{v(z)v_1 k}{\gamma}-\frac{v(z)f(t_{i-1})}{\gamma(h-h_0)}  + \frac{\gamma L}{h_0 v_1(h-h_0)} \frac{d \bar{a}_{i-1}(t_{i-1})}{dt} \right ]  \\ \nonumber
&+\frac{\gamma L}{h_0 v_1(h-h_0) }  \exp \left ({\frac{v_1}{\gamma} (\Delta y -v(z) \Delta t)} \right ) \int_0^{\Delta t 
- \frac{\Delta y}{v(z)}} \exp {\frac{v_1 v(z) \tau}{\gamma}} \frac{d^2 \bar{a}_{i-1}}{d t^2}(\tau + t_{i-1})\, d \tau
\end{align}
where $\Delta y = y - y_{i-1,2}$ and $\Delta t =t_i -t_{i-1}$.  The first term on the right hand side of (\ref{khosham}) is of order $O(\gamma)$. The second term  converges to zero faster that $O(\gamma)$ for $y \leq y_{i,2} - \beta \gamma^{1/3}$. Here, $\beta$ is a positive constant. Assuming that $\bar{a}_{i-1}(t)$ is concave and $\frac{d^2 \bar{a}_{i-1}}{d t^2} \leq 0$, the maximum magnitude of the integral on the right hand side of (\ref{khosham}) is obtained by setting the exponential term in the integral equal to $\exp {\frac{v_1v(z)}{\gamma} \left( \Delta t -\Delta y /v(z) \right)}$. Therefore, the maximum magnitude of the last term on the right hand side of (\ref{khosham}) is  of order $O(\gamma)$. Thus, 
\begin{align}\nonumber
v(z) \frac{\partial A(t_i,y,z)}{\partial y}=O(\gamma),~y \in (y_{i-1,2},y_{i,2}- \gamma^{1/3}],~ z \in(h_0,h-h_0).
\end{align}
By substituting the above equation into (\ref{e1}), (\ref{bahbah}) is obtained.

\section{The proof of Theorem~\ref{consist} on the asymptotic properties of the estimator}
\label{consistproof}
For the proof of part (1), it is required to show that $g_i(A_1,t)$ in (\ref{eq70}) is continuous in $A_1$. 
In the multi-compartment model (\ref{eq7}), $\bar{\textbf{u}}_i$ and $\bar{a}_i$  are continuous in $A_i$ since $R$ and $G$ are continuous functions in $\bar{\textbf{u}}_i$, $\bar{a}_i$, and $A_i$ and Lipschitz in $\bar{\textbf{u}}_i$ and $\bar{a}_i$ with Lipschitz constant independent of $A_i$ \cite{coddington}. From (\ref{response}), $g_i(A_1,t)=F(\bar{\textbf{u}}_i)$. Since $F$ models an electrical response as a measurable physical quantity, it is continuous in $\bar{\textbf{u}}_i$. Thus, $g_i(A_1,t)$ is continuous in $\bar{\textbf{u}}_i$. In our case study in Sec.\ref{result}, $F$ is equal to one of the elements of $\bar{\textbf{u}}_i$. On the other hand, from the recursions in (\ref{Ai}), $A_i$ is continuous in $A_1$. The continuity of $g_i(A_1,t)$ in $\bar{\textbf{u}}_i$, $\bar{\textbf{u}}_i$ in $A_i$, and $A_i$ in $A_1$ concludes that $g_i(A_1,t)$ is also continuous in  $A_1$. Relating the current estimation problem  to the one stated in Theorem~\ref{consis_in_paper}, $\textbf{f}_t$ in the theorem corresponds to the vector $\left[g_1(A_1,t^{i,k}),g_2(A_1,t^{i,k}),\ldots ,g_N(A_1,t^{i,k})\right]^T$. The noise vectors $\left[n_1^k,n_2^k,\ldots ,n_N^k\right]^T$ for $k=1,\ldots ,S$ are independent identically distributed with covariance $\sigma^2 I$. Dealing with condition (b) in Theorem~\ref{consis_in_paper}, it can be shown that the tail cross product of $g_i(A_1,t^{i,k})$ exists. It is required to show that for each $A_1,A_2>0$, $S^{-1} \sum_{k=1}^S g_i(A_1,t^{i,k})g_i(A_2,t^{i,k})$ converges uniformly to a specific function of $A_1$ and $A_2$ when $S,t^{i,S} \to \infty$. When $t^{i,S} \to \infty$, the concentration of chemical species on each biosensor eventually achieves equilibrium. So does the response $g_i$ which is a function of the concentration of chemical species. Thus, $g_i(A,t^{i,S}) \to g_{e,i}(A)$  where $g_{e,i}(A)$ is the steady state value of the response. Assuming that the limit function $g_{e,i}(A)$ is continuous, it can be concluded that $g_i(A,t^{i,S}) \to g_{e,i}(A)$ uniformly as $S \to \infty$. Therefore, $S^{-1} \sum_{k=1}^S g_i(A_1,t^{i,k})g_i(A_2,t^{i,k})$ converges uniformly to $g_{e,i}(A_1)g_{e,i}(A_2)$. In order to satisfy assumption (c), the existence of corresponding tail cross products can be similarly proved.
 
 There is a one-to-one correspondence between the value of concentration $A_1$ and  the concentration of chemical species at equilibrium. Since there is also a one-to-one correspondence between the response and the concentration of chemical species on the biosensor, $g_i(A_1,t^{i,k})$ is a bijection function of $A_1$. Therefore, the only value of $A_1$ which minimizes
$Q(A_1)=\lim_{S \to \infty} S^{-1} \sum_{k=1}^S  \sum_{i=1}^N \left ( g_i(A^*,t^{i,k})-g_i(A_1,t^{i,k}) \right)^2$
is $A_1=A^*$. Thus, condition (b) in Theorem~\ref{consis_in_paper} is satisfied and $\hat{A_1}$ is strongly consistent.
 
In our case, $a(\theta_0)$ in Theorem~\ref{asym} corresponds to 
\begin{align}\label{gamma2} 
\textstyle
 \Gamma= \lim_{S \to \infty } S^{-1} \sum_{k=1}^S\sum_{i=1}^N \left (\partial g_i(A^*,t^{i,k}) / \partial A_1 \right )^2,
 \end{align}
where  $\partial g_i(A^*,t^{i,k}) / \partial A_1$ is the value  of $\partial g_i(A_1,t^{i,k}) / \partial A_1$ at $A_1=A^*$. We assume that all biosensors have identical parameters and binding properties. 
 Thus, according to (\ref{eq7}), there is a common functional relation between the response of each biosensor and the concentration in its outer compartment expressed as
 \vspace{-1mm}
 \begin{equation}\label{samefun}
g_i(A_1,t)=g(A_i,t-t_i) \quad \text{for} \quad i=1,\ldots ,N
\end{equation}
\vspace{-1mm}
Here, $g(A_i,t)$ denotes the response of any biosensor when the concentration in its outer compartment is $A_i$. The response $g(A,t)$ commences at $t=0$. Recall the time shift $t_i$ in (\ref{samefun}), is the response delay of biosensor $i$.   According to the recursion (\ref{Ai}) in Theorem~\ref{multiple-comp}, (\ref{samefun}) can be expressed as $g_i(A_1,t)=g(\alpha ^{i-1}A_1,t-t_i)$.  Therefore, $\Gamma$ in (\ref{gamma2}) can be expressed by (\ref{gamma}). 
 Regarding Theorem~\ref{asym}, the estimation error $\hat{A}_1-A^*$ is asymptotically normal.
%

\section{Proof of Theorem~\ref{pos}}
\label{positivity}
In order to prove the positivity, we rely on the fact that the solution of the spatially discretized problem converges to the solution of the original PDE problem when the spatial discretization step is sufficiently small. Thus we can conclude the positivity of the solution of PDE by proving that the solution of the discretized system is positive. We use the following theorem to prove the positivity of the discretized system:
\begin{theorem} 
\label{thrpos}
Suppose that $F(t,v)$ is continuous and satisfies a Lipschitz condition with respect to $v$. Then the system $w^{\prime}(t)=F(t,w(t))$ is positive  iff for any vector $v  \in \mathbb{R} ^m$ and all $i=1,...,m$ and $t \geq 0$,
\begin{equation} 
\label{posthrm}
v \geq 0, \, v_i=0 \, \Longrightarrow F_i(t,v)\geq 0
\end{equation}
\end{theorem}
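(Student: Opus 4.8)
Throughout I read ``positive'' in the standard sense: for every initial time $t_0\ge 0$ and every initial value $w(t_0)=v$ with $v\ge 0$ componentwise, the (unique, by the Lipschitz hypothesis) solution of $w'(t)=F(t,w(t))$ satisfies $w(t)\ge 0$ for all $t\ge t_0$ in its maximal interval of existence. The two implications are proved separately. For necessity, assuming positivity, fix $t_0\ge 0$ and $v\ge 0$ with $v_i=0$ and take the solution $w$ with $w(t_0)=v$; then $w_i(t)\ge 0=w_i(t_0)$ for $t\ge t_0$, so $w_i'(t_0)=\lim_{h\downarrow 0}\bigl(w_i(t_0+h)-w_i(t_0)\bigr)/h\ge 0$, and since $w$ solves the ODE this reads $F_i(t_0,v)=F_i(t_0,w(t_0))=w_i'(t_0)\ge 0$, which is (\ref{posthrm}).

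Sufficiency is the substantive direction, and I would establish it by a perturbation-and-limit argument. Fix $t_0\ge 0$, a non-negative initial value $w_0$, and a compact interval $[t_0,T]$ on which the unperturbed solution $w$ exists. For small $\epsilon>0$ let $w^\epsilon$ solve $(w^\epsilon)'(t)=F(t,w^\epsilon(t))+\epsilon\,\mathbf{1}$, $w^\epsilon(t_0)=w_0$, where $\mathbf{1}=(1,\dots,1)$; a Gronwall bound based on the Lipschitz estimate gives a uniform a priori bound on $[t_0,T]$, so each $w^\epsilon$ exists on all of $[t_0,T]$ once $\epsilon$ is small. I then claim $w^\epsilon\ge 0$ on $[t_0,T]$: if not, let $t_1\in(t_0,T]$ be the first time some component vanishes, so that $w^\epsilon(t_1)\ge 0$ with $w^\epsilon_i(t_1)=0$ for some $i$, while $w^\epsilon(t)>0$ for $t_0<t<t_1$ (near $t_0$ this uses that any component $j$ with $w_{0,j}=0$ has $(w^\epsilon_j)'(t_0)=F_j(t_0,w_0)+\epsilon\ge\epsilon>0$ by (\ref{posthrm})). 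Then the left derivative of $w^\epsilon_i$ at $t_1$ is $\le 0$, whereas the ODE together with (\ref{posthrm}) applied to the non-negative vector $w^\epsilon(t_1)$ with vanishing $i$-th entry gives $(w^\epsilon_i)'(t_1)=F_i(t_1,w^\epsilon(t_1))+\epsilon\ge\epsilon>0$, a contradiction. Hence $w^\epsilon\ge 0$ on $[t_0,T]$, and letting $\epsilon\downarrow 0$, continuous dependence of solutions on the vector field --- which is exactly where the Lipschitz hypothesis is used --- yields $w^\epsilon\to w$ uniformly on $[t_0,T]$, so $w\ge 0$ there. As $t_0$, $w_0$, $T$ were arbitrary, the system is positive.

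The step I expect to be the main obstacle is this sufficiency argument, and specifically the need for the $\epsilon\,\mathbf{1}$ perturbation: the bare hypothesis (\ref{posthrm}) only forces $w_i'=0$ at a first contact of the trajectory with the face $\{v_i=0\}$, which is too weak to preclude a crossing, whereas the strictly inward drift upgrades this to $w_i'>0$ and the Lipschitz condition then legitimizes the limit $\epsilon\downarrow 0$. A minor technical point is the common interval of existence of the $w^\epsilon$, handled by the Gronwall bound above. With Theorem~\ref{thrpos} in hand, Theorem~\ref{pos} would follow by verifying that the spatial discretization of (\ref{e1})--(\ref{inlet}), (\ref{new odes}), (\ref{boundary_on_ICS}) satisfies (\ref{posthrm}) and then passing to the mesh limit.
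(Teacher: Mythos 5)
Your proof is correct. Note that the paper does not actually prove Theorem~\ref{thrpos}; it only states that the proof can be found in the cited reference \cite{numerical}, and your argument --- necessity via the one-sided derivative at a boundary point, sufficiency via the $\epsilon\,\mathbf{1}$-perturbed system, a first-crossing contradiction using the strict inequality $F_i+\epsilon\ge\epsilon>0$, and the Gronwall/continuous-dependence passage to the limit $\epsilon\downarrow 0$ --- is exactly the standard argument given in that literature, so there is nothing in the paper to diverge from.
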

By positivity of the system $w^{\prime}(t)=F(t,w(t))$ we mean that for any initial value $w(0) \geq 0$ the solution is $w(t) \geq 0$ for all $t\geq 0$.
\begin{figure}[t]
 \centering
  \includegraphics[width=.5\textwidth]{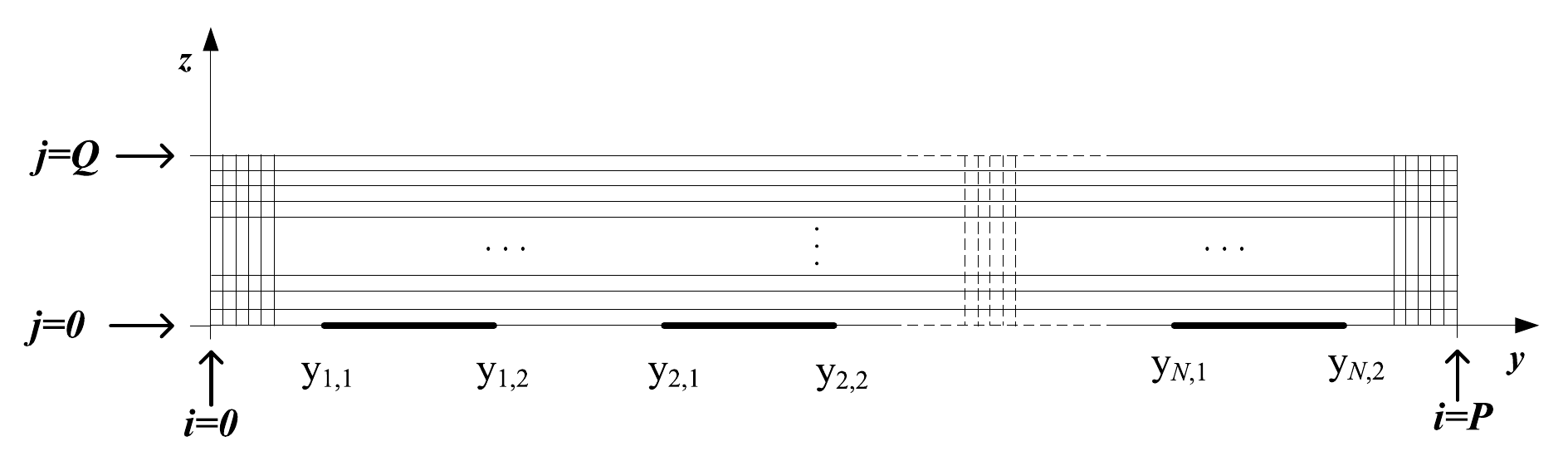}
   \caption{\label{discrete} \small{Discretizing the flow chamber along the length and height of chamber. The length and width are respectively discretized to $P+1$ and $Q+1$ steps }}
   \vspace{-2mm}
 \end{figure}
The requirement of a Lipschitz condition can be slightly relaxed. It is sufficient that the initial value problem has a unique solution for any $w(0) \geq 0$ \cite{Runge}. The proof of this theorem can be found in \cite{numerical}. 

Consider the General PDE model described in Sec. \ref{sec:II}. It can be proved that the solution of this model exists and is unique \cite{Evans}. We discretize the flow chamber in $y$ and $z$ directions to $P+1$ and $Q+1$ steps respectively as shown in Fig. \ref{discrete}. The inlet locates at $y=0$ which is equivalent in the discretized system to $i=0$ and the outlet is at $i=P$. The array of sensors is located at $j=0$ and the upper face of the chamber is at $j=Q$. By applying first and second order approximations, the discretized form of the advection-diffusion equation inside the chamber can be written as
\begin{align} \label{Aij}
A_{i,j}^{\prime}&=\gamma \frac{A_{i-1,j}-2A_{i,j}+A_{i+1,j}}{\Delta y^2}-v(j)\frac{A_{i+1,j}-A_{i,j}}{\Delta y} 
+\gamma \frac{A_{i,j-1}-2A_{i,j}+A_{i,j+1}}{\Delta z^2} \\ \nonumber
&\text{for}\quad i=1,...,P-1,j=1,...,Q-1
\end{align}
Where the discretization steps in $y$ and $z$ direction are respectively referred by $\Delta y>0$ and $\Delta z>0$.
We need to consider $A_{i,j}$ at the boundaries separately through the boundary conditions. The boundary condition at $y=0$ simply converts to
\begin{equation} \label{boundaryaty0}
 A_{0,j}=A_1, \quad j=0,1,...,Q.
\end{equation}
The discretized form of boundary condition at $y=l$ can be written as
\begin{equation}\label{boundaryatyL}
\frac{A_{P,j}-A_{P-1,j}}{\Delta y}=0,\quad j=0,1,...,Q.
\end{equation}
similarly we have the insulation condition on the upper side of chamber;
\begin{equation}\label{boundaryatzh}
\frac{A_{i,Q}-A_{i,Q-1}}{\Delta z}=0\quad i=0,1,...,P.
\end{equation}
At $z=0$, at points which are not located on sensors, the boundary condition can be written as
\begin{equation} \label{boundaryatsensor1}
\frac{A_{i,1}-A_{i,0}}{\Delta z}=0\ \text{for}  \, i \in \{0,1,...,P\} - {\cup}_{k=1}^N V_k .
\end{equation}
where $V_k$ is defined as the set of indices of the discretized points which are located on the $k$-th sensor;
\begin{equation}\label{Vk}
V_k=\{ i \in \{0,1,...,P\} \mid i \Delta y \in [y_{k,1},y_{k,2}] \}
\end{equation}
The boundary condition on sensors, according to (\ref{boundary_on_ICS}), can be written as
\begin{equation}\label{boundaryatsensor2}
\gamma \frac{A_{i,1}-A_{i,0}}{\Delta z}=A_{i,0} q^{t}\textbf{u}_k^i-p^{t}\textbf{u}_k^i, \quad i \in V_k, \quad \text{for} \quad i=1,...,N.
\end{equation}
where $\textbf{u}_k^i=u_k(t,i \Delta y)$ for $i \in V_k$. 
The system of ODEs in (\ref{new odes}) remains unchanged for the discretized domain. At each point, the time derivative of the chemical concentration vector on $k$-th sensor can be written as
\begin{align}\label{uki}
{\textbf{u}_k^i}^{\prime}&=M f(\textbf{u}_k^i,A_{i,0})\, \quad  i \in V_k, \quad \text{for} \quad i=1,...,N. 
\end{align}
For proving the positivity (short for 'non-negativity preserving') of the solution, we need to show that $A_{i,j} \geq 0$ for $i=0,...,P$ and $j=0,...,Q$ and $\textbf{u}_k^i \geq$ for $i \in V_k$ where $k=1,2,...,N$. Using Theorem~\ref{thrpos}, we select $v$ to be a vector containing $A_{i,j}$ for $i=1,...,P-1$ and $j=1,...,Q-1$ and $\textbf{u}_k^i$ for $i \in V_k$ and $k=1,...,N$ as its elements. $v$ can be represented as
\begin{equation}\label{v}
v=\ [ \ [ A_  { i,j  } \ ]_{i \in \{1,...,P-1\} , j \in \{ 1,...,Q-1 \}} , \ [ \textbf{u}_k^i\ ] _{i \in V_k, k \in \{1,...,N\}}  \ ]
\end{equation}

Apparently,  $F(t,v)$ contains their derivatives given by (\ref{Aij}) and (\ref{uki}). Considering Theorem \ref{thrpos}, it is required to show that the statement in (\ref{posthrm}) holds for this case. To this end, we split the elements of $v$ into several sets and investigate (\ref{posthrm}) for each one separately. First set is denoted by $v_1$ and contains the elements $A_{i,j}$ for $i=2,...,P-2$ and $j=2,...,Q-2$ such that
$ v_1= \{ A_ { i,j  } |i \in \{2,...,P-2\} , j \in \{ 2,...,Q-2 \} \}$. We should show for $A_{i,j} \in v_1$ that $A_{i,j}^{\prime}$ is not negative when $A_{i,j}=0$ and $v\geq 0$. Considering these assumptions and according to (\ref{Aij}), $A_{i,j}^{\prime}$ is obtained as
\begin{align} \label{Aij0}
&A_{i,j}^{\prime}=\gamma \frac{A_{i-1,j}+A_{i+1,j}}{\Delta y^2}-v(j)\frac{A_{i+1,j}}{\Delta y}+\gamma \frac{A_{i,j-1}+A_{i,j+1}}{\Delta z^2} \\ \nonumber
&i=2,...,P-2,j=2,...,Q-2
\end{align}
We have $i \in \{ 2,...,P-2\} $ and $j \in \{ 2,...,Q-2 \}$. Therefore, $A_{i-1,j}$, $A_{i+1,j}$, $A_{i,j-1}$, and $A_{i,j+1}$ belong to the vector $v$ and are positive according to the assumption in (\ref{posthrm}). It is easy to show that for $\Delta y < \gamma / \bar{v}$, $A_{i,j}^{\prime}$ in (\ref{Aij0}) is non-negative.
The second part of $v$ is denoted by $v_2$ and comprises the values of $A_{1,j}$ for $j=2,...,Q-2$;
\begin{equation}\label{v2}
v_2= \{ A_{ 1,j  } | j \in \{ 2,...,Q-2 \}\}
\end{equation}
According to (\ref{Aij}), $A_{1,j}^{\prime}$ is written in terms of $A_{0,j}$, $A_{2,j}$, $A_{1,j-1}$, and $A_{1,j+1}$. For the values of $j \in \{ 2,...,Q-2\}$, all these elements, except $A_{0,j}$, belong to the vector $v$ which is assumed to be non-negative. $A_{0,j}$ does not belong to $v$ but it is non-negative due to the boundary condition described in (\ref{boundaryaty0}). Similarly, it can thus be shown that for $\Delta y < \gamma /\bar{v}$, the time derivative of the second part of $v$ which is given in (\ref{v2}) is non-negative. 
The third part of $v$ contains $A_{P-1,j}$ for $j=2,...,Q-2$ and is denoted by $v_3=\{ A_{P-1,j}| j\in\{2,...,Q-2\}\}$.
In a similar way, it can be shown that elements of $v_3$ are non-negative. To prove this, we need to show that $A_{P,j}$ which does not belong to $v$ is non-negative. According to the boundary condition in (\ref{boundaryatyL}), it is obvious that
\begin{equation}\label{boundaryatyL2}
A_{P,j}=A_{P-1,j}
\end{equation}
On the other hand, the assumption in (\ref{posthrm}) implies that $A_{P-1,j}$ should be set to zero. Hence, $A_{P,j}=A_{P-1,j}=0$ according to (\ref{boundaryatyL2}). Consequently, $A_{P-1,j}^{\prime}$ can be written as
\begin{equation}\label{AN}
A_{P-1,j}^{\prime}=\gamma \frac{A_{P-2,j}}{\Delta y^2}+\gamma \frac{A_{P-1,j-1}+A_{P-1,j+1}}{\Delta z^2} 
\end{equation}
Since all the terms on the right-hand side of (\ref{AN}) is non-negative, $A_{P-1,j}^{\prime}$ for $j=2,...,Q-2$ is also non-negative.
The fourth part of $v$ is considered as $v_4=\{ A_{i,1}| i\in\{2,...,P-2\} \}$.
Its time derivative can be written as
\begin{align} \label{Ai1}
&A_{i,1}^{\prime}=\gamma \frac{A_{i-1,1}+A_{i+1,1}}{\Delta y^2}-v(1)\frac{A_{i+1,1}}{\Delta y}+\gamma \frac{A_{i,0}+A_{i,2}}{\Delta z^2} \\ \nonumber
&\text{for}\quad i=2,...,P-2
\end{align}
In (\ref{Ai1}), $A_{i,0}$ is the only element that is not included in $v$ and its positivity should be investigated. According to the boundary conditions in (\ref{boundaryatsensor1}) and (\ref{boundaryatsensor2}), $A_{i,0}$ can be written as
\begin{equation} \label{Ai0}
 A_{i,0}= \left \{
\begin{array}{cc}
\frac{A_{i,1}+\frac{\Delta z}{\gamma} p^t \textbf{u}_k^i} {1+\frac{\Delta z}{\gamma} q^t\textbf{u}_k^i}  & i \in V_k \quad \text{for} \quad k=1,...,N \\
A_{i,1} & i \in \{0,1,...,P\} - {\cup}_{k=1}^N V_k
\end{array} \right .
\end{equation}
where $V_k$ is defined in (\ref{Vk}).According to our assumption, $A_{i,1}=0$. Besides, considering the assumption that $v\geq 0$, we have $\textbf{u}_k^i \geq 0$. Therefore, (\ref{Ai0}) shows that $A_{i,0}$ is not negative. It can be shown that for the same small value of $\Delta y < \gamma / \bar{v}$, $A_{i,1}^{\prime} \geq 0$ for $i=2,...,P-2$.

The fifth part of $v$ to investigate is described as
$v_5=\{ A_{i,Q-1}| i \in \{2,...,P-2\} \}$.
According to (\ref{posthrm}) and (\ref{Aij}), We obtain
\begin{align}\label{AiM_1}
&A_{i,Q-1}^{\prime}=\gamma \frac{A_{i-1,Q-1}+A_{i+1,Q-1}}{\Delta y^2}-v(Q-1) \frac{A_{i+1,Q-1}}{\Delta y} \\ \nonumber
&+ \gamma \frac{A_{i,Q-2}+A_{i,Q}}{\Delta z ^2} \quad \text{for} \quad i=2,...,P-2
\end{align}
In (\ref{AiM_1}), the only term that does not belong to the vector $v$ is $A_{i,Q}$. According to the insulation boundary condition in (\ref{boundaryatzh}), $A_{i,Q}$ is equal to $A_{i,Q-1}$ which is assumed to be zero. Therefore, (\ref{AiM_1}) is reduced to
\begin{align}\label{AiM_1_2}
&A_{i,Q-1}^{\prime}=\gamma \frac{A_{i-1,Q-1}+A_{i+1,Q-1}}{\Delta y^2}-v(Q-1) \frac{A_{i+1,Q-1}}{\Delta y} \\ \nonumber
&+ \gamma \frac{A_{i,Q-2}}{\Delta z ^2} \quad \text{for} \quad i=2,...,P-2
\end{align}
and according to the previous explanations, it is easy to realize from (\ref{AiM_1_2}) that $A_{i,Q-1}^{\prime}$ is not negative for $\Delta y < \gamma / \bar{v}$.
 The remaining elements of $v$ for checking the statement in (\ref{posthrm}) about, are $A_{1,1}$, $A_{1,Q-1}$, $A_{P-1,1}$, $A_{P-1,Q-1}$, and $\textbf{u}_k^i$ for $i \in V_k$ and $k=1,...,N$.
 
 For $A_{1,1}$, it can be written that
\begin{align} \label{A11}
&A_{1,1}^{\prime}= \gamma \frac{A_{0,1}+A_{2,1}}{\Delta y^2} - v(1) \frac{A_{2,1}}{\Delta y} + \gamma \frac{A_{1,0}+A_{1,2}}{\Delta z^2}
\end{align}
$A_{2,1}$ and $A_{1,2}$ belong to $v$ and are therefore non-negative. Regarding (\ref{boundaryaty0}), $A_{0,1}=A_1 \geq 0$. $A_{1,0}$ is achieved by either (\ref{boundaryatsensor1}) or (\ref{boundaryatsensor2}) as
\begin{equation} \label{A10_1}
A_{1,0}= \left \{
\begin{array}{cc}
\frac{\frac{\Delta z}{\gamma} p^t \textbf{u}_k^1} {1+\frac{\Delta z}{\gamma} q^t\textbf{u}_k^1}  & 1 \in V_k \quad \text{for} \quad k=1,...,N \\
A_{1,1} & 1 \notin {\cup}_{k=1}^N V_k
\end{array} \right .
\end{equation}
It is concluded from (\ref{A10_1}) that $A_{1,0}$ is not negative. 
Consequently, according to (\ref{A11}), $A_{1,1}^{\prime} \geq0$ for $\Delta y < \gamma / \bar{v}$.

Equation (\ref{Aij}) for $A_{1,Q-1}$ is written as
\begin{align}\label{A_1QQ}
A_{1,Q-1}^{\prime}= \gamma \frac{A_{0,Q-1}+A_{2,Q-1}}{\Delta y^2} -v(Q-1) \frac{A_{2,Q-1}}{\Delta y} 
+ \gamma \frac{A_{1,Q-2}+A_{1,Q}}{\Delta z ^2}
\end{align}
All terms on the right side of (\ref{A_1QQ}) other than $A_{0,Q-1}$ and $A_{1,Q}$ are included in $v$ and are non-negative. According to the boundary condition at inlet in  (\ref{boundaryaty0}), $A_{0,Q-1}=A_1 \geq 0$. $A_{1,Q}$ is equal to $A_{1,Q-1}$ according to (\ref{boundaryatzh}) which is assumed to be zero. Consequently, $A_{1,Q-1}^{\prime}$ can be proved to be non-negative for the same small value of $\Delta y$.

Considering (\ref{Aij}) for $i=P-1$ and $j=1$ and setting $A_{P-1,1}$ to zero, we can state that
\begin{align}\label{900}
&A_{P-1,1}^{\prime} = \gamma \frac{A_{P-2,1}+A_{P,1}}{\Delta y ^2} -v(1) \frac{A_{P,1}}{\Delta y} \\ \nonumber
& + \gamma \frac{A_{P-1,0}+A_{P-1,2}}{\Delta z^2}
\end{align}
$A_{P,1}$ can be shown to be non-negative according to (\ref{boundaryatyL}). According to (\ref{Ai0}) and the assumption that $A_{P-1,1}=0$, $A_{P-1,0}$ is greater than or equal to zero. The other elements are all included in $v$ and are positive according to assumption in  (\ref{posthrm}). Therefore, it can be easily seen from (\ref{900}) that $A_{P-1,1}^{\prime} \geq 0$ for the same small value of $\Delta y$.

For $A_{P-1,Q-1}$, we rewrite (\ref{Aij}) for $i=P-1$ and $j=Q-1$ as
\begin{align}\label{eq99}
A_{P-1,Q-1}^{\prime}= \gamma \frac{A_{P-2,Q-1}+A_{P,Q-1}}{\Delta y^2}-v(Q-1) \frac{A_{P,Q-1}}{\Delta y} 
+ \gamma \frac{A_{P-1,Q-2}+A_{P-1,Q}}{\Delta z ^2}
\end{align}
According to boundary conditions in (\ref{boundaryatyL}) and (\ref{boundaryatzh}), $A_{P,Q-1}=A_{P-1,Q-1}=0$ and $A_{P-1,Q}=A_{P-1,Q-1}=0$. Therefore, (\ref{eq99}) can be rewritten as 
\begin{align}\nonumber
A_{P-1,Q-1}^{\prime}= \gamma \frac{A_{P-2,Q-1}}{\Delta y^2}+ \gamma \frac{A_{P-1,Q-2}}{\Delta z ^2}\nonumber
\end{align}
which shows that $A_{P-1,Q-1}^{\prime}$ is positive since $A_{P-2,Q-1}$ and $A_{P-1,Q-2}$ belong to $v$ and are positive.
Reminding the definition of $v$ in (\ref{v}), we should check the validity of (\ref{posthrm}) for the remaining elements $\textbf{u}_k^i$ when $i \in V_k$ for $k=1,...,N$. $\textbf{u}_k^i$ is the concentration vector at point $i$ on $k$-th sensor. Since the proof of (\ref{posthrm}) for each element of this vector is the same and independent of the others, we refer to each $\textbf{u}_k^i$ by $\textbf{u}$ for simplicity. $\textbf{u}$ has eight elements and can be presented as $\textbf{u}=[u_1~u_2~...u_8]$. To complete the validity of (\ref{posthrm}), we should prove that when $v\geq0$ and $u_j=0$, the time derivative ${u_j}^{\prime}$ is not negative for $j=1,...,8$. According to (\ref{uki}), ${u_j}^{\prime}$ can be written as
\begin{equation}\label{u_j}
{u_j}^{\prime}=M_j f(\textbf{u},A_{i,0})
\end{equation}
 where $M_j$ is the $j$-th row of $M$. According to (\ref{Ai0}), $A_{i,0}$ on the sensor can be written as
\begin{equation}\nonumber
 A_{i,0}=\frac{A_{i,1}+\frac{\Delta z}{\gamma} p^t \textbf{u}} {1+\frac{\Delta z}{\gamma} q^t\textbf{u}}
 \end{equation}
 We have already proved that $A_{i,1}$ is positive for $i=1,...,P-1$. It is easy to show that $A_{0,1}$ and $A_{P,1}$ are positive. Consequently, the assumption that $\textbf{u} \geq 0$ results in $A_{i,0} \geq 0$. On the other hand, by considering the elements of $M$ in (\ref{M}) and the definition of $f(\textbf{u},A)$ given in (\ref{r(u,A)}) and (\ref{rate eq}), it can be easily shown that $u_j^{\prime} \geq 0$ when we set $\textbf{u} \geq 0$ and $u_j=0$. From chemical point of view, the only negative terms in (\ref{u_j}) which reduce the amount of $u_j$ correspond to the ones that have $u_j$ as a multiplicative factor. This is due to the fact that the backward reaction only happens when $u_j$ is not zero. However, in our assumptions $u_j$ is zero and the negative terms are therefore omitted from the right side of (\ref{u_j}).
 
 The positivity of the variables $A_{i,j}$ for $i=0$ and $j=1,...,Q-1$ can be easily shown since $A_{0,j}$ is equal to $A_1$ at the inlet and therefore $A_{0,j} \geq 0$  for $j=0,...,Q$. According to the boundary condition in (\ref{boundaryatyL}), $A_{P,j}=A_{P-1,j}$ and since we proved that $A_{P-1,j}$ for $j=1,...,Q-1$ is positive, we can conclude that $A_{P,j} \geq 0$ for $j=1,...,Q-1$. For $j=0$ and $i=1,...,P-1$ we can prove the positivity by referring to (\ref{Ai0}) because we have proved that $A_{i,1}\geq 0$ for $i=1,...,P-1$ and $\textbf{u}_k^i \geq 0$ for $i \in V_k$ and $k=1,...,N$. For $j=0$ and $i=P$, we can use the boundary condition in (\ref{boundaryatyL}) and write $A_{P,0}=A_{P-1,0}$ and since we just proved that $A_{P-1,0}\geq 0$, it can be easily seen that $A_{P,0}\geq0$. For $j=Q$ and $i=1,...,P$ we can use the insulation boundary condition given in (\ref{boundaryatzh}) and write $A_{i,Q}=A_{i,Q-1}$. We have already proved that $A_{i,Q-1} \geq 0$ for $i=1,...,P-1$ using Theorem~\ref{thrpos}. The proof of positivity of $A_{P,Q-1}$ is explained earlier in this paragraph. The proof of positivity of the solution of the PDE model of Sec. \ref{sec:II} is now complete.
 
\bibliographystyle{IEEEtran}
\bibliography{ref2}
\end{document}